\renewcommand\@biblabel[1]{#1.}
\theoremstyle{plain}
\newtheorem{theorem}{Theorem}[section]
\newtheorem{defn}{Definition}[section]
\newtheorem{rem}{Remark}[section]
\newtheorem{Lem}{Lemma}[section]
\newtheorem{Prop}{Proposition}[section]
\newtheorem{Cor}{Corollary}[section]
\newtheorem{notation}{Notation}[section]
\newenvironment{lem}%
          { \begin{Lem}    }%
          { \end{Lem} }
\newcommand{\lw}{\Gamma_{\ell_{m}^2}(X,F)}
\newcommand{\lpw}{\Gamma_{\ell_{m}^p}(X,F)}
\newcommand{\lqw}{\Gamma_{\ell_{m}^{p^*}}(X,F)}
\newcommand{\lwp}{{\ell^{p}_{m}}(X)}
\newcommand{\lwi}{\Gamma_{\ell^{\infty}}(X,F)}
\newcommand{\Dom}{\operatorname{Dom}}
\newcommand{\hpm}{H_{p,\max}}
\newcommand{\hpmin}{H_{p,\min}}
\title{Maximal Accretive Extensions of Schr\"odinger Operators on Vector Bundles over Infinite Graphs}
\author{Ognjen Milatovic}
\address{Department of Mathematics
and Statistics \\ University of North Florida \\ Jacksonville, FL
32224 \\ USA}
\email{omilatov@unf.edu}
\author{Fran{\c c}oise Truc}
\address{Grenoble University\\ Institut Fourier\\
Unit{\'e} mixte de recherche CNRS-UJF 5582\\
BP 74\\
38402-Saint Martin d'H\`eres Cedex
\\France}
\email{francoise.truc@ujf-grenoble.fr}
\keywords{contraction semigroup, essentially self-adjoint, infinite graph, maximal accretive, Schr\"odinger operator, vector bundle}
\subjclass[2010]{39A12, 35J10, 47B25}
\begin{document}
\maketitle

\begin{abstract}
Given a Hermitian vector bundle over an infinite weighted graph,  we define the Laplacian associated to a unitary connection on this bundle and study a perturbation of this Laplacian by an operator-valued potential. We give a sufficient condition for the resulting Schr\"odinger operator to serve as the generator of a strongly continuous contraction semigroup in the corresponding $\ell^{p}$-space. Additionally, in the context of $\ell^{2}$-space, we study the essential self-adjointness of the corresponding Schr\"odinger operator.
\end{abstract}

\section{Introduction}\label{S:main}
In recent years, there has been quite a bit of interest in the study  of the Laplacian in $\ell^{p}$-spaces on infinite graphs.
More precisely, let  $(X,b,m)$ be a weighted graph as described in section~\ref{SS:setting} below,
and  let us define a form $Q^{(c)}$  on
(complex-valued) finitely supported functions on $X$ by
\begin{align}\label{E:form-comp}
Q^{(c)}(u,v)&:=\frac{1}{2}\sum_{x,y\in X}b(x,y)(u(x)- u(y))(\overline{v(x)- v(y)})\nonumber\\
  &+\sum_{x\in X}w(x)u(x)\overline{v(x)},
\end{align}
where $w\colon X\to [0,\infty)$.
We denote by $\ell^p_{m}(X)$ the space of $\ell^p$-summable functions with weight $m$, by $Q^{(D)}$ the closure of $Q^{(c)}$
in $\ell^2_{m}(X)$, and by $L$ the associated self-adjoint operator. Since $Q^{(D)}$ is a Dirichlet form,
the semigroup $e^{-tL}$, $t\geq 0$, extends to a $C_0$-semigroup on $\lwp$, where $p\in [1,\infty)$.  We denote by $-L_p$ the generators of these semigroups. For the definition of a $C_0$-semigroup and its generator, see the Appendix. The following characterization of operators $L_p$ is given in~\cite{Keller-Lenz-09}:

\medskip

\noindent\emph{Assume that
\[
\sum_{n\in {\mathbb{Z}}_{+}}m(x_n)=\infty , \eqno{{\rm \bf{ (A1)}}}
\]
for any sequence $\{x_n\}_{n\in{\mathbb{Z}}_{+}}$ of vertices such that  $x_{n}\sim x_{n+1}$ for all $n\in{\mathbb{Z}}_{+}$.
Then for any $p\in[1,\infty)$, the operator $L_p$ is the restriction of $\widetilde{L}$
to
\[
\Dom(L_p) = \{u\in\lwp\cap \widetilde{D_s}:\, \widetilde{L}u\in\lwp\},
\]
where
\[
\widetilde{D_{s}}:=\{u\colon X\to\mathbb{C}\colon\, \displaystyle\sum_{y\in X}b(x, y)|u(y)|<\infty, \forall x\in X\},
\]
$\widetilde{L}:=\Delta_{b,m}+w/m$, and
\begin{equation}\label{E:ord-lap}
(\Delta_{b,m} u)(x):=\frac{1}{m(x)}\sum_{y\in X}b(x,y)(u(x)-u(y)).
\end{equation}}

Actually, (A1) can be replaced  when $w=0$ by the existence of a compatible intrinsic metric (see~\cite{Hua-Keller-13}), or
if moreover $p=2$, by the existence of an intrinsic metric so that $\frac{1}{m(x)}\sum_{y\in X}b(x,y)$ is bounded on the combinatorial neighborhood of each distance ball (see~\cite{HKMW}).

In the case of Schr\"odinger operators on  a Riemannian manifold $M$, it is natural to study maximal accretivity or self-adjointness  properties
of operators acting on sections of vector bundles over $M$.
But the notion of vector bundle is also relevant on graphs; see for example
~\cite{FKC-S-92}, ~\cite{GMT-13}, ~\cite{Kenyon-11}, and~\cite{Si-Vu-12}.
The aim of this paper is precisely to study such properties in the setup of a vector bundle over an infinite weighted graph.
In particular, we give sufficient conditions
for the equality of the operator $\hpm$ (vector-bundle analogue of $L_p$) and the closure in $\lpw$ (the corresponding
$\ell^{p}$-space of sections of the bundle $F\to X$) of the restriction of $\widetilde{H}_{W,\Phi}$ (vector-bundle analogue of
$\widetilde{L}$) to the set of finitely supported sections. 

The paper is organized as follows. 
In sections~\ref{SS:setting},~\ref{SS:v-bundle} and ~\ref{SS:operator} we describe
the setting: discrete sets, Hermitian vector bundle and connection,  operators.
The main results are presented in section~\ref{SS:theorems}, with some comments.
Section~\ref{S:preli} contains  preliminary results, such as Green's formula, Kato's inequality, and ground state transform. Sections~\ref{S:kato-lp-1},~\ref{SS:main-1-1} and \ref{S:t-5} are devoted to  the proofs of the theorems. For readers' convenience, in the Appendix we review some concepts from the theory of semigroups of operators: $C_0$-semigroup, generator of a $C_0$-semigroup, and (maximal) accretivity. Additionally, the Appendix contains the statement of Hille--Yosida Theorem and a discussion of the connection between self-adjointness and maximal accretivity of operators in Hilbert spaces.

\section{Setup and Main Results}\label{S:results}
\subsection{Weighted Graph}\label{SS:setting}
Let $X$ be a countably infinite set, equipped with a measure $m\colon X\to (0,\infty)$.
Let $b\colon X\times X\to[0,\infty)$ be a function such that

\medskip

\noindent (i) $b (x, y) = b (y, x)$, for all $x,\,y\in X$;

\medskip

\noindent (ii) $b(x,x)=0$, for all $x\in X$;

\medskip

\noindent (iii) $\displaystyle\sum_{y\in X}b(x,y)<\infty$, for all $x\in X$.

\noindent Vertices $x,\, y\in X$ with $b(x, y) > 0$ are called \emph{neighbors}, and we denote this relationship by $x\sim y$. We call the triple $(X,b,m)$ a \emph{weighted graph}. We assume that $(X,b,m)$ is connected, that is, for any $x,\,y\in X$ there exists a path $\gamma$ joining $x$ and $y$. Here, a path $\gamma$ is a sequence $x_1,\,x_2,\,\dots,x_n\in X$ such that $x=x_1$, $y=x_n$, and $x_{j}\sim x_{j+1}$ for all $1\leq j\leq n-1$.

\subsection{Hermitian Vector Bundles on Graphs and Connection}\label{SS:v-bundle} A family of (finite-dimensional) complex
linear spaces $F=\bigsqcup_{x\in X}F_x$ is called a \emph{complex vector bundle over $X$} and written $F\to X$, if any two
$F_x$ and $F_y$ are isomorphic as complex vector spaces. Then the $F_x$\rq{}s are called the \emph{fibers} of $F\to X$, and
the complex linear space
$$
\Gamma(X,F):=\prod_{x\in X} F_x=\left\{u|\, u\colon X\to F,\, u(x)\in F_x\right\}
$$
is called the space of \emph{sections in $F\to X$}. We define the space of \emph{finitely supported sections} $\Gamma_c(X,F)$ of
 $F\to X$ as the set of $u\in \Gamma(X,F)$ such that $u(x)=0$ for all but finitely many $x\in X$.

\begin{defn}  An assignment $\Phi$ which associates to any $x\sim y$ an isomorphism of complex vector spaces
 $\Phi_{x,y}\colon F_x\to F_y$ is called a connection on the complex vector bundle $F\to X$ if
\begin{equation}\label{E:phi-transp}
\Phi_{y,x}=\left(\Phi_{x,y}\right)^{-1}\,\qquad \textrm{for all }x\sim y.
\end{equation}
\end{defn}

\begin{defn}\label{herm} (i) A family of complex scalar products
$$
\langle\cdot,\cdot\rangle_{F_x}\colon F_x\times F_x\to \mathbb{C}, \quad x\in X,
$$
is called a Hermitian structure on the complex vector bundle $F\to X$, and the pair given by $F\to X$ and $\langle\cdot,\cdot\rangle_{F_x}$ is called a Hermitian vector bundle over $X$.

\medskip

\noindent (ii) A connection $\Phi$ on a complex vector bundle $F\to X$ is called unitary
with respect to a Hermitian structure $\langle\cdot,\cdot\rangle_{F_x}$ if for all $x\sim y$ one has
\begin{equation}\label{E:phi-unit}\nonumber
\Phi_{x,y}^*=\Phi^{-1}_{x,y},
\end{equation}
where $T^*$ denotes the Hermitian adjoint of an operator $T\colon F_x\to F_y$ with respect to $\langle\cdot,\cdot\rangle_{F_x}$ and $\langle\cdot,\cdot\rangle_{F_y}$.
\end{defn}

\begin{defn}\label{lapl} The  \emph{Laplacian} $\Delta^{F,\Phi}_{b,m}\colon \widetilde{D}\to \Gamma(X,F)$ on a Hermitian vector
bundle $F\to X$ with a unitary connection $\Phi$ is a linear operator with the domain
\begin{equation}\label{E:dom-F}
\widetilde{D}:=\{u\in \Gamma(X,F)\colon \displaystyle\sum_{y\in X}b(x, y)|u(y)|_{F_{y}} <\infty,\textrm{ for all }x\in X\}
\end{equation}
defined by the formula
\begin{equation}\label{E:magnetic-lap}
    (\Delta^{F,\Phi}_{b,m} u)(x)=\frac{1}{m(x)}\sum_{y\in X}b(x,y)(u(x)-\Phi_{y,x} u(y)).
\end{equation}
\end{defn}
\begin{rem} \emph{The operator $\Delta^{F,\Phi}_{b,m}$ is well-defined by the property (iii) of $b(x,y)$, definition~(\ref{E:dom-F}), and unitarity of $\Phi$.}
\end{rem}
\begin{rem} \emph{In the case $F_x=\{x\}\times\mathbb{C}$ with the canonical Hermitian structure, the sections
of the bundle $F\to X$ can be canonically identified with complex-valued functions on $X$. Under this
identification, any connection $\Phi$ can be uniquely written as $\Phi_{x,y}=e^{i\theta(y,x)}$,
 where $\theta \colon X\times X\to [-\pi,\pi]$ is a magnetic potential on $(X,b)$, which,
due to~(\ref{E:phi-transp}), satisfies the property $\theta(x,y)=-\theta(y,x)$ for all $x,\,y\in X$. As a result, we get
 the magnetic Laplacian operator. In particular, if $\theta\equiv 0$ we get the Laplacian operator~(\ref{E:ord-lap}).}
\end{rem}
\begin{rem} \emph{If the property (iii) of $b(x,y)$ is replaced by}
\[
\displaystyle\sharp\,\{y\in X\colon b(x,y)>0\}<\infty, \textrm{ for all }x\in X,
\]
\emph{where $\sharp\, S$ denotes the number of elements in the set $S$, then the graph $(X,b,m)$ is called locally finite. In this case, we have $\widetilde{D}=\Gamma(X,F)$.}
\end{rem}

\subsection{Operators} \label{SS:operator}
From now on  we will always work in the setting of a Hermitian vector bundle $F\to X$ over a connected weighted graph $(X, b, m)$,
equipped with a unitary connection $\Phi$.


\begin{defn}\label{schto} We define the Schr\"odinger-type operator $\widetilde{H}_{W,\Phi}\colon \widetilde{D}\to \Gamma(X,F)$ by the formula
\begin{equation}\label{E:magnetic-schro}
\widetilde{H}_{W,\Phi}u:= \Delta^{F,\Phi}_{b,m} u +Wu,
\end{equation}
where $W(x)\colon F_{x}\to F_{x}$ is a linear operator for any  $x\in X$, and $\widetilde{D}$ is as in~(\ref{E:dom-F}).
\end{defn}

\begin{defn}\label{D:l-p-space} (i)
For any  $1\leq p<\infty$ we denote by $\lpw$  the space of sections $u\in \Gamma(X,F)$ such that
\begin{equation}\label{E:l-p-def}\nonumber
\|u\|_{p}^{p}:=\sum_{x\in X}m(x)|u(x)|^p_{F_x} <\infty,
\end{equation}
where $|\cdot|_{F_x}$ denotes the norm in $F_{x}$ corresponding to the Hermitian product $\langle \cdot,\cdot\rangle_{F_{x}}$.
The space of $p$-summable functions $X\to\mathbb{C}$ with weight $m$ will be denoted by $\ell^p_{m}(X)$.

\noindent (ii) By ${\lwi}$ we denote the space of bounded sections of $F$, equipped with the norm
\begin{equation}\label{E:l-infty-def}\nonumber
\|u\|_{\infty}:=\sup_{x\in X}|u(x)|_{F_x}.
\end{equation}
The space of bounded functions on $X$ will be denoted by $\ell^{\infty}(X)$.
\end{defn}

The space $\lw$ is a Hilbert space with the inner product
\begin{equation}\label{E:inner-w}\nonumber
(u,v):=\sum_{x\in X}m(x)\langle u(x),v(x)\rangle_{F_{x}}
\end{equation}
\begin{defn}\label{D:max-op} Let $1\leq p<+\infty$ and let $\widetilde{D}$ be as in~(\ref{E:dom-F}).  The maximal operator $\hpm$ is given by the formula $\hpm u=\widetilde{H}_{W,\Phi}u$ with
domain
\begin{equation}\label{E:domhpm}
\Dom(\hpm)=\{u\in \lpw\cap \widetilde{D}: \widetilde{H}_{W,\Phi} u\in \lpw\}.
\end{equation}
Moreover if
\begin{equation}\label{E:assumption-vcomp-lwp}
\widetilde{H}_{W,\Phi}[\Gamma_{c}(X,F)]\subseteq \lpw,
\end{equation}
then we set $\hpmin:=\widetilde{H}_{W,\Phi}|_{\Gamma_{c}(X,F)}$.\end{defn}
\begin{rem}\label{R:min-def-op} \emph{Note that under our assumptions on $(X,b,m)$, the inclusion (\ref{E:assumption-vcomp-lwp}) does not necessarily hold.
It holds if we additionally assume that $(X,b,m)$ is locally finite.}
\end{rem}

\subsection{Statement of the Results}\label{SS:theorems}
Let us denote by $\overline{T}$ the closure of an operator $T$.

\begin{theorem}\label{T:main-1} Let $W(x)\colon F_{x}\to F_{x}$ be a linear operator satisfying
\begin{equation}\label{E:bdd-below-w}
\textrm{Re}\,\langle W(x)u(x),u(x)\rangle_{F_{x}}\geq 0, \qquad\textrm{for all }x\in X.
\end{equation}
Then, the following properties hold:
\begin{enumerate}
\item [(i)]  Let $1<p<\infty$, and assume that~(\ref{E:assumption-vcomp-lwp}) and (A1) are satisfied. Then the operator $-\overline{\hpmin}$ generates a strongly continuous contraction semigroup on $\lpw$.
\item [(ii)] Assume that~(\ref{E:assumption-vcomp-lwp}) is satisfied for $p=1$, and that $(X, b,m)$ is stochastically complete. Then the operator $-\overline{{H_{1,\min}}}$ generates a strongly continuous contraction semigroup on $\Gamma_{\ell^1_{m}}(X,F)$.
\end{enumerate}
\end{theorem}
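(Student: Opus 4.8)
The plan is to verify, in each case, the hypotheses of the Hille--Yosida theorem recalled in the Appendix: that $-\overline{\hpmin}$ is densely defined and closed, that $\overline{\hpmin}$ is accretive, and that $\operatorname{Ran}(\lambda+\overline{\hpmin})$ fills the whole space for some $\lambda>0$. Density of $\Gamma_c(X,F)$ in $\lpw$ for $1\le p<\infty$ disposes of the first point, and once accretivity is established, $\hpmin$ is closable, $\overline{\hpmin}$ is closed, and the inequality $\|(\lambda+\overline{\hpmin})u\|_p\ge\lambda\|u\|_p$ shows its range is closed; hence it suffices to prove $\operatorname{Ran}(\lambda+\hpmin)$ is dense.

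\emph{Step 1: accretivity.} For $u\in\Gamma_c(X,F)$ I would pair $\widetilde H_{W,\Phi}u$ with a normalized duality map $J(u)$ of $u$ in $\lpw$, which is proportional to $|u(x)|_{F_x}^{p-2}u(x)$ when $p>1$ and to $u(x)/|u(x)|_{F_x}$ on $\{u\neq0\}$ when $p=1$. Unitarity of $\Phi$ and the Cauchy--Schwarz inequality give the pointwise bound $\operatorname{Re}\langle(\Delta^{F,\Phi}_{b,m}u)(x),u(x)\rangle_{F_x}\ge |u(x)|_{F_x}\,(\Delta_{b,m}|u|)(x)$ (a form of the Kato inequality of \refs{preli}), while hypothesis~(\ref{E:bdd-below-w}) makes the $W$-contribution nonnegative. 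Multiplying by $m(x)|u(x)|_{F_x}^{p-2}$, summing over $x$ and symmetrizing in $x,y$ turns the Laplacian term into $\frac12\sum_{x,y}b(x,y)\bigl(|u(x)|_{F_x}-|u(y)|_{F_y}\bigr)\bigl(|u(x)|_{F_x}^{p-1}-|u(y)|_{F_y}^{p-1}\bigr)\ge0$ since $t\mapsto t^{p-1}$ is nondecreasing (for $p=1$ this sum is identically $0$). Hence $\operatorname{Re}(\widetilde H_{W,\Phi}u,J(u))\ge0$, so $\hpmin$, and therefore $\overline{\hpmin}$, is accretive.

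\emph{Step 2: density of the range and conclusion.} Take $\lambda=1$. By duality it is enough to show that any section $v$ annihilating $\operatorname{Ran}(1+\hpmin)$ --- an element of $\lqw$ if $p>1$, of $\lwi$ if $p=1$ --- must vanish. Writing the annihilation condition with $u=\delta_{x_0}\otimes e$ for arbitrary $x_0\in X$ and $e\in F_{x_0}$, and using~(\ref{E:assumption-vcomp-lwp}) together with H\"older's inequality, one first sees that $v\in\widetilde D$ and then, after rearranging the (absolutely convergent) pairing, that $v$ satisfies the pointwise equation $(\Delta^{F,\Phi}_{b,m}+W^{\ast}+1)v=0$; here $\Phi^{\ast}_{x,y}=\Phi_{y,x}$ by unitarity and $W^{\ast}(x)$ again satisfies~(\ref{E:bdd-below-w}). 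Applying the Kato inequality to $v$ and using $\operatorname{Re}\langle W^{\ast}(x)v(x),v(x)\rangle_{F_x}\ge0$ yields $(\Delta_{b,m}+1)h\le0$ pointwise, where $h(x):=|v(x)|_{F_x}\ge0$. In case (i), under (A1), the Keller--Lenz description of $L_{p^{\ast}}$ recalled in \refs{main} and the maximal accretivity of $L_{p^{\ast}}$ force such a nonnegative $\ell^{p^{\ast}}_m(X)$-subsolution to vanish, so $v=0$; in case (ii), the assumed stochastic completeness of $(X,b,m)$ is exactly the assertion that the only bounded nonnegative (sub)solution of $\Delta_{b,m}h=-h$ is $h\equiv0$, so again $v=0$. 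Thus $\operatorname{Ran}(1+\hpmin)$ is dense, $\operatorname{Ran}(1+\overline{\hpmin})$ is the whole space, and Hille--Yosida produces the strongly continuous contraction semigroup generated by $-\overline{\hpmin}$.

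\emph{Main obstacle.} The technical heart of the argument is Step 2: one must make sure the nonnegative subsolution $h=|v|$ really lies in the domain to which the scalar results apply --- in case (i) this amounts to checking $\Delta_{b,m}h\in\ell^{p^{\ast}}_m(X)$, so that $h\in\Dom(L_{p^{\ast}})$ and the positivity-preserving resolvent $(1+L_{p^{\ast}})^{-1}$ may be applied --- and only then invoke the relevant scalar uniqueness: the Keller--Lenz domain characterization together with maximal accretivity of $L_{p^{\ast}}$ when $p>1$, and the characterization of stochastic completeness through bounded solutions of $\Delta_{b,m}h=-h$ when $p=1$. The vector-bundle versions of Green's formula and Kato's inequality from \refs{preli}, which govern the passage from the bundle operator $\Delta^{F,\Phi}_{b,m}$ to the scalar operator $\Delta_{b,m}$, are precisely what make the pointwise reductions above legitimate.
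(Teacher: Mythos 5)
Your architecture is the same as the paper's: accretivity of $\hpmin$ via a duality pairing with $u|u|^{p-2}$, density of $\textrm{Ran}(\lambda+\hpmin)$ by testing an annihilating functional $v\in\lqw$ (resp.\ $\lwi$), passing through Kato's inequality to the scalar subsolution inequality $(\Delta_{b,m}+\textrm{Re}\,\lambda)|v|\leq 0$, and then Hille--Yosida. Step 1 is fine (your symmetrized form $\frac12\sum b(x,y)(|u(x)|-|u(y)|)(|u(x)|^{p-1}-|u(y)|^{p-1})\geq 0$ is an equivalent repackaging of the paper's Green's-formula-plus-Young computation; the parenthetical that this sum is ``identically $0$'' for $p=1$ is not quite right, but it is still nonnegative, which is all you need).

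The one place where your argument does not close is the scalar uniqueness step in case (i), and you have correctly located it yourself. You propose to conclude $h=|v|\equiv 0$ from $(\Delta_{b,m}+1)h\leq 0$, $h\geq 0$, $h\in\ell^{p^*}_m(X)$ by applying the positivity-preserving resolvent $(1+L_{p^*,\max})^{-1}$; but that requires $h\in\Dom(L_{p^*,\max})$, i.e.\ $\Delta_{b,m}h\in\ell^{p^*}_m(X)$, and nothing in the setup gives you this (you only know $\Delta_{b,m}h$ is pointwise dominated above). The paper avoids the issue entirely by invoking Proposition~8 of \cite{Keller-Lenz-10} (restated as Lemma~\ref{L:keller-lenz-inequality}): under (A1), any \emph{real-valued} $u\in\ell^{p^*}_m(X)$ satisfying the pointwise inequality $(\Delta_{b,m}+\alpha)u\geq 0$ with $\alpha>0$ is nonnegative --- no membership in the operator domain is required. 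Applied to $-h$ this gives $h\leq 0$, hence $h=0$. So the fix is not to push $h$ into $\Dom(L_{p^*,\max})$ but to replace the resolvent argument by this minimum principle. Your treatment of case (ii) is fine as stated, since the paper's definition of stochastic completeness (Remark~\ref{R:stochastic-completeness}) is verbatim the nonexistence of a nontrivial nonnegative bounded subsolution of $(\Delta_{b,m}+\alpha)w\leq 0$.
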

\begin{rem}\label{R:stochastic-completeness}
\emph{By Definition 1.1 in~\cite{Keller-Lenz-09}, stochastic completeness of $(X,b,m)$ means that there is no non-trivial and non-negative $w\in \ell^{\infty}(X)$ such that}
\[
(\Delta_{b,m}+\alpha)w\leq 0, \qquad\alpha>0,
\]
\emph{where $\Delta_{b,m}$ is as in~(\ref{E:ord-lap}).}
\end{rem}
\begin{rem}\label{R:max-accretive} \emph{The notions of generator of a strongly continuous semigroup and (maximal) accretivity are reviewed in the Appendix. In particular, under the assumptions of Theorem~\ref{T:main-1}, the operator $\overline{\hpmin}$ is maximal accretive for all $1\leq p<\infty$.}
\end{rem}
In the next theorem, we make the following assumption, which is stronger than~(\ref{E:assumption-vcomp-lwp}):
\begin{equation}\label{E:assumption-vcomp-lwp-q}
\widetilde{H}_{W,\Phi}[\Gamma_{c}(X,F)] \subseteq\lpw\cap\lqw,
\end{equation}
with $1/p+1/p^*=1$.
\begin{rem}\label{R:well-def-op-new} \emph{If $(X,b,m)$ is a locally finite graph then~(\ref{E:assumption-vcomp-lwp-q}) is satisfied. If $\inf_{x\in X}m(x)>0$ then (A1) and~(\ref{E:assumption-vcomp-lwp-q}) are satisfied.}
\end{rem}

\begin{theorem}\label{T:main-1-1}  Assume that the hypotheses (A1) and (\ref{E:bdd-below-w}) are satisfied. Then, the following properties hold:
\begin{enumerate}
\item [(i)] Let $1<p<\infty$, and assume that~(\ref{E:assumption-vcomp-lwp-q}) is satisfied. Then $\overline{\hpmin}=\hpm$.
\item [(ii)] Assume that~(\ref{E:assumption-vcomp-lwp-q}) is satisfied for $p=1$, and that $(X, b,m)$ is stochastically complete. Then $\overline{{H_{1,\min}}}=H_{1,\max}$.
\end{enumerate}
\end{theorem}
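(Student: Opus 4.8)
The plan is to deduce Theorem~\ref{T:main-1-1} from Theorem~\ref{T:main-1} together with a duality argument. For part (i), since $\overline{\hpmin}$ generates a strongly continuous contraction semigroup on $\lpw$ by Theorem~\ref{T:main-1}(i), it is in particular a closed, densely defined, maximal accretive operator, hence $m$-accretive; the key consequence I want is that $\lambda + \overline{\hpmin}$ is \emph{surjective} onto $\lpw$ for every $\lambda > 0$ (this is part of the Hille--Yosida characterization reviewed in the Appendix). On the other hand, the inclusion $\overline{\hpmin}\subseteq\hpm$ is automatic: every $u\in\Dom(\overline{\hpmin})$ lies in $\lpw\cap\widetilde{D}$ with $\widetilde{H}_{W,\Phi}u = \overline{\hpmin}u\in\lpw$, which is precisely the defining condition~\eqref{E:domhpm}. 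So it suffices to show $\hpm\subseteq\overline{\hpmin}$. Given $u\in\Dom(\hpm)$, set $f := (\lambda + \hpm)u \in \lpw$ for a fixed $\lambda>0$; by surjectivity choose $v\in\Dom(\overline{\hpmin})$ with $(\lambda+\overline{\hpmin})v = f$. Then $w := u - v$ satisfies $w\in\widetilde{D}$, $w\in\lpw$, and $(\lambda + \widetilde{H}_{W,\Phi})w = 0$ pointwise. The goal is then to prove the uniqueness statement: the only $w\in\lpw\cap\widetilde{D}$ with $(\lambda+\widetilde{H}_{W,\Phi})w=0$ is $w=0$; this forces $u=v\in\Dom(\overline{\hpmin})$ and finishes (i). Part (ii) is identical, with the stochastic-completeness hypothesis feeding into Theorem~\ref{T:main-1}(ii) and into the corresponding $\ell^1$ uniqueness step.

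To prove the uniqueness, I would use the stronger hypothesis~\eqref{E:assumption-vcomp-lwp-q}, which is exactly where the $\lqw$ condition (with $1/p + 1/p^* = 1$) is needed and why this assumption replaces~\eqref{E:assumption-vcomp-lwp}. By the same argument applied in $\ell^{p^*}$, the operator $-\overline{H_{p^*,\min}}$ also generates a contraction semigroup (Theorem~\ref{T:main-1}(i) applies since (A1) and~\eqref{E:assumption-vcomp-lwp-q} give~\eqref{E:assumption-vcomp-lwp} for the exponent $p^*$ as well, using~\eqref{E:bdd-below-w} which is symmetric). I want to identify $H_{p,\max}$ as (a restriction of) the adjoint of $\overline{H_{p^*,\min}}$ under the natural pairing $\langle u, \phi\rangle = \sum_{x}m(x)\langle u(x),\phi(x)\rangle_{F_x}$ between $\lpw$ and $\lqw$. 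Concretely: for $w\in\lpw\cap\widetilde{D}$ with $(\lambda+\widetilde{H}_{W,\Phi})w = 0$ and any $\phi\in\Gamma_c(X,F)$, Green's formula (cited as available from Section~\ref{S:preli}) gives $\langle w, (\lambda + \widetilde{H}_{W,\Phi}^{\dagger})\phi\rangle = \langle (\lambda+\widetilde{H}_{W,\Phi})w, \phi\rangle = 0$, where $\widetilde{H}_{W,\Phi}^{\dagger}$ is the formal adjoint expression (the Laplacian is formally self-adjoint by unitarity of $\Phi$, and $W$ is replaced by $W(x)^*$, which still satisfies~\eqref{E:bdd-below-w}). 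Since $\lambda + \overline{H_{p^*,\min}^{\dagger}}$ is surjective onto $\lqw$, its range already contains $(\lambda + \widetilde{H}_{W,\Phi}^{\dagger})[\Gamma_c(X,F)]$; a density/closure argument then shows $w$ annihilates a dense subset of $\lqw$, so $w=0$.

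The main obstacle I anticipate is making the duality bookkeeping fully rigorous: one must check that Green's formula has no boundary-term obstruction when pairing a general $w\in\lpw\cap\widetilde{D}$ against $\phi\in\Gamma_c(X,F)$ (finitely supported $\phi$ kills the boundary terms, but one needs the sums to converge absolutely, which follows from $w\in\widetilde{D}$ and property (iii) of $b$), and that the closure $\overline{H_{p^*,\min}^{\dagger}}$ is genuinely the one whose range-surjectivity Theorem~\ref{T:main-1} delivers — i.e. that the formal adjoint operator satisfies all the hypotheses of that theorem with the same exponent-conjugate. A secondary subtlety is the $p=1$ case in part (ii): here $p^* = \infty$, so the dual semigroup lives on an $L^\infty$-type space and is only weak-$*$ continuous; one circumvents this by running the uniqueness argument through stochastic completeness directly (the scalar comparison in Remark~\ref{R:stochastic-completeness}, lifted to bundle sections via Kato's inequality $\Delta_{b,m}|w| \le \operatorname{Re}\langle \Phi_{y,x}^{-1}(\cdots), w/|w|\rangle$-type bounds from Section~\ref{S:preli}) rather than through a literal dual semigroup, which is why the hypothesis in (ii) is phrased in terms of stochastic completeness rather than a contraction property at $p^*=\infty$.
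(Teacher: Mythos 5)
Your overall architecture — surjectivity of $\xi+\overline{\hpmin}$ supplied by Theorem~\ref{T:main-1}, combined with a uniqueness statement for the maximal operator — is the same as the paper's, and your duality argument for uniqueness when $1<p<\infty$ is a workable variant of it: a section $w\in\lpw\cap\widetilde{D}$ with $(\lambda+\widetilde{H}_{W,\Phi})w=0$ annihilates $(\overline{\lambda}+\widetilde{H}_{W^*,\Phi})[\Gamma_c(X,F)]$ by Green's formula, and this set is dense in $\lqw$ because Theorem~\ref{T:main-1}(i) applies at the exponent $p^*$ with the potential $W^*$ (which again satisfies~\eqref{E:bdd-below-w}, and $\widetilde{H}_{W^*,\Phi}v$ differs from $\widetilde{H}_{W,\Phi}v$ by the finitely supported section $(W^*-W)v$, so~\eqref{E:assumption-vcomp-lwp-q} transfers). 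The paper instead proves injectivity of $\lambda+\hpm$ directly (Lemma~\ref{L:kato-2}) via Kato's inequality, domination by the scalar resolvent $(\xi+L_{p,\max})^{-1}$, and Lemma~\ref{L:keller-lenz-inequality}; the two routes carry essentially the same content, since the density you invoke is exactly Lemma~\ref{L:hpmin-dense-range} at the conjugate exponent. There are, however, two genuine gaps.

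First, the inclusion $\overline{\hpmin}\subseteq\hpm$ is not ``automatic.'' An element $u\in\Dom(\overline{\hpmin})$ is given only as a limit of $u_k\in\Gamma_c(X,F)$ with $\widetilde{H}_{W,\Phi}u_k$ convergent in $\lpw$; there is no a priori reason why $u\in\widetilde{D}$, nor why $\widetilde{H}_{W,\Phi}u$ equals the limit $f$, and you need both before you can even assert that $w=u-v$ satisfies $(\lambda+\widetilde{H}_{W,\Phi})w=0$ pointwise. This is precisely where the paper's Lemma~\ref{L:closed-hpm} (closedness of $\hpm$) does real work: the $\lqw$ half of hypothesis~\eqref{E:assumption-vcomp-lwp-q} gives $b(x,\cdot)/m(\cdot)\in\ell^{p^*}_m(X)$, hence $u\in\widetilde{D}$ by Lemma~\ref{L:Green}(i), and Green's formula lets one pass to the limit in the weak formulation to identify $\widetilde{H}_{W,\Phi}u=f$. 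You must supply this argument; without it both inclusions in your scheme collapse.

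Second, part (ii) is not actually proved. At $p=1$ the duality route is unavailable, as you note, but your proposed substitute misidentifies the tool: the section $w$ in the uniqueness step lies in $\Gamma_{\ell^1_m}(X,F)$, not in $\lwi$, so stochastic completeness (Remark~\ref{R:stochastic-completeness}, a statement about \emph{bounded} subsolutions) cannot be applied to $|w|$. Stochastic completeness enters only through the surjectivity of $\xi+\overline{H_{1,\min}}$, i.e.\ through Theorem~\ref{T:main-1}(ii), where the annihilating functional does live in $\ell^\infty(X)$. The injectivity of $\lambda+H_{1,\max}$ is obtained in the paper by the direct argument of Lemma~\ref{L:kato-2}, valid uniformly for $1\leq p<\infty$ under (A1) alone: Kato's inequality yields $(\xi+\Delta_{b,m})|w|\leq 0$ with $|w|\in\ell^1_m(X)$, and Lemma~\ref{L:keller-lenz-inequality} at $p=1$ forces $|w|\leq 0$. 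You need some version of this $\ell^1$ positivity-preservation step to close part (ii).
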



\noindent Regarding self-adjointness problems, let us point out that the results of~\cite{vtt-11-4, vtt-11-3, Masamune-09, Milatovic-11,
Milatovic-Truc-13} and Theorem 5 in~\cite{Keller-Lenz-10} can be extended to the vector-bundle setting. As an illustration, we state
and prove an extension of Theorem 1.5 from \cite{Milatovic-Truc-13}. Before doing this, we recall the notion of intrinsic metric.

\begin{defn}  A pseudo metric  is a map $d\colon X\times X \to [0,\infty)$  such that $d(x,y)=d(y,x)$, for all $x,\,y\in X$; $d(x,x)=0$,
 for all $x\in X$; and $d(x,y)$ satisfies the triangle inequality. \\
A pseudo metric $d=d_{\sigma}$ is called a path pseudo metric if there exists a map $\sigma\colon X\times X\to [0,\infty)$ such that
$\sigma(x,y)=\sigma(y,x)$, for all $x,\,y\in X$; $\sigma(x,y)>0$ if and only if $x\sim y$; and
$
d_{\sigma}(x,y)=\inf\{l_{\sigma}(\gamma):\gamma \textrm{ path connecting }x\textrm{ and }y\},
$
where the length $l_{\sigma}$ of the path $\gamma=(x_0,x_1,\dots,x_n)$ is given by
\begin{equation}\label{E:l-sigma-def}\nonumber
l_{\sigma}(\gamma)=\sum_{i=0}^{n-1}\sigma(x_i,x_{i+1}).
\end{equation}
\end{defn}
On a locally finite graph a path pseudo metric is a metric; see~\cite{HKMW}.
\begin{defn}\label{D-intrinsic} A pseudo metric $d$ on $(X,b,m)$ is called intrinsic if
\[
\frac{1}{m(x)}\sum_{y\in X}b(x,y)(d(x,y))^2\leq 1,\qquad\textrm{for all }x\in X.
\]
\end{defn}
\begin{rem}
\emph{The concept of intrinsic pseudo metric  goes back to~\cite{flw} which discusses a more general situation. For graphs it has been discussed in
~\cite{Huang-11} and ~\cite{folz-11}. Related earlier material can be found in~\cite{MU-11}.}
\end{rem}
We will also use the notion of a \emph{regular graph} introduced in~\cite{vtt-11-4}, which is a (not yet published) revised version of
~\cite{vtt-11-2}. Let us first recall the definition of the boundary of a given set $A\subseteq X$:
\[
\partial A:=\{x\in A\colon \textrm{there exists }y\in X\backslash A\textrm{ such that }y\sim x\}.
\]
In the sequel, we denote by $(\widehat{X},\widehat{d})$ the metric completion of $(X,d)$, and we define the \emph{Cauchy boundary}
$X_{\infty}$ as follows: $X_{\infty}:=\widehat{X}\backslash X$. Note that $(X,d)$ is metrically complete if and only if $X_{\infty}$
 is empty. For a path metric $d=d_{\sigma}$ on $X$ and $x\in X$, we set
\begin{equation}\label{E:dist-boundary}
D(x):=\inf_{z\in X_{\infty}}\widehat{d}_{\sigma}(x,z).
\end{equation}
 \begin{defn}\label{ikx} Let $(X,b,m)$ be a graph with a path metric ${d}_{\sigma}$. Let $\varepsilon>0$ be given and let
\begin{equation}\label{E:x-eps}
X_{\varepsilon}:=\{ x\in X \colon D(x) \geq \varepsilon\}.
\end{equation}
We say that  $(X,b,m)$  is \emph{regular} if for any sufficiently small $\varepsilon$, any bounded
subset of $\partial X_{\varepsilon}$ (for the metric $d_{\sigma}$)  is finite.
\end{defn}
\begin{rem} \emph{Metrically complete graphs $(X,d)$ are regular since $D(x) = \infty$ for any $x \in X$,
which implies that  $X_{\varepsilon} = X$, so that $\partial X_{\varepsilon}= \emptyset$.}
\end{rem}
\begin{rem}
\emph{Definition~\ref{ikx} covers also a broad class of metrically non-complete graphs.  For instance, weighted graphs whose first Betti number is finite are regular. In particular, any weighted tree is  regular; see \cite{vtt-11-4}.}
\end{rem}

\begin{theorem}\label{T:main-5}  Let $(X, b, m)$ be a locally finite graph with an intrinsic path metric  $d=d_{\sigma}$. Assume that  $(X, b, m)$
is regular. Let $W(x)\colon F_{x}\to F_{x}$ be a linear self-adjoint operator such that there exists a constant $C$ satisfying
\begin{equation}\label{E:potential-minorant}
\langle W(x)u(x),u(x)\rangle_{F_{x}} \geq\left(\frac{1}{2(D(x))^2}-C\right)|u(x)|^2_{F_x},
\end{equation}
for all $x\in X$ and all $u \in \Gamma_{c}(X,F)$, where $D(x)$ is as in~(\ref{E:dist-boundary}).
Then $\widetilde{H}_{W,\Phi}$ is essentially self-adjoint on $\Gamma_{c}(X,F)$.
\end{theorem}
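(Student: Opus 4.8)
The plan is to reduce the essential self-adjointness of $\widetilde H_{W,\Phi}$ to an estimate of Agmon--Allegretto--Piepenbrink type, following the scheme of \cite{Milatovic-Truc-13} but carried over to the bundle setting via the tools assembled in Section~\ref{S:preli}. First I would recall that for a locally finite graph $\widetilde D=\Gamma(X,F)$ and $\Gamma_c(X,F)\subseteq\Dom(\hmax)$, and that $\widetilde H_{W,\Phi}$ is symmetric on $\Gamma_c(X,F)$ by Green's formula together with the unitarity of $\Phi$ and the self-adjointness of each $W(x)$. Essential self-adjointness is then equivalent to showing that any $u\in\lw$ with $(\widetilde H_{W,\Phi}+1)u=0$ (in the sense that $u\in\widetilde D$ and $\widetilde H_{W,\Phi}u=-u$) must vanish; equivalently, $\ker(\hmax^{\ast}\pm i)=0$, and the standard trick is to replace $W$ by $W+C+1$ so that the relevant operator is strictly positive and it suffices to rule out $L^2$ solutions of $\widetilde H_{W+C+1,\Phi}u=0$.

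The heart of the argument is a Kato-type inequality and a cutoff/Caccioppoli estimate. Using the pointwise Kato inequality from Section~\ref{S:preli} (the bundle version: $\Delta_{b,m}|u| \le \mathrm{Re}\,\langle \Phi\text{-sign}(u),\Delta^{F,\Phi}_{b,m}u\rangle$ in the distributional sense on the graph), one deduces that if $\widetilde H_{W,\Phi}u=-u$ then the scalar function $f:=|u|_{F_\cdot}$ satisfies $(\Delta_{b,m}+\,w)\,f\le 0$ with $w(x):=\frac{1}{2(D(x))^2}-C$ playing the role of the scalar potential, after absorbing the $+C+1$ shift. Next I would use the regularity hypothesis and the intrinsic path metric: multiplying the inequality by $f\,\eta^2$ where $\eta$ is a Lipschitz cutoff adapted to the distance function $D(\cdot)$ to the Cauchy boundary (equal to $1$ on $X_\varepsilon$, supported where $D>\varepsilon/2$, with $|\nabla_\sigma\eta|\lesssim 1/D$), and summing by parts, one obtains an inequality of the form
\begin{equation}\label{E:caccioppoli-plan}\nonumber
\frac12\sum_{x,y}b(x,y)(\eta(x)f(x)-\eta(y)f(y))^2 \le \sum_x f(x)^2\eta(x)^2\Big(C+\tfrac{1}{2(D(x))^2}\cdot(-1)+\cdots\Big)+\sum_x f(x)^2|\nabla_\sigma\eta(x)|^2,
\end{equation}
and the key Hardy-type inequality $\sum_x \frac{f(x)^2}{(D(x))^2}\,\eta(x)^2$ bounded by the gradient term (this is where the constant $\frac12$ in \eqref{E:potential-minorant} is sharp) lets the potential term absorb the cutoff error. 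The intrinsic condition (Definition~\ref{D-intrinsic}) is exactly what converts $\frac1{m(x)}\sum_y b(x,y)d(x,y)^2\le 1$ into control of $\sum_x f(x)^2|\nabla_\sigma\eta(x)|^2$ by $\|f\|_2^2$ uniformly in $\varepsilon$; the regularity of $(X,b,m)$ guarantees the boundary sums $\partial X_\varepsilon$ are finite so the summation-by-parts manipulations are legitimate and the boundary contributions are finite at each stage.

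Finally, letting $\varepsilon\downarrow 0$ and using that $X_\varepsilon\uparrow X\setminus X_\infty$ (and $u$, being in $\lw$, lives on $X$), the left side converges to $\frac12\sum_{x,y}b(x,y)(f(x)-f(y))^2$ while the cutoff error tends to $0$, forcing $f$ to be constant along edges, hence (by connectedness) constant; combined with $f\in\ell^2_m$ and $m$ not finite on infinite paths — or more directly with the strict positivity built in by the $+1$ shift, which makes $\sum_x w(x)f(x)^2>0$ unless $f\equiv0$ — we conclude $f\equiv0$, i.e. $u\equiv0$. The main obstacle I anticipate is the bundle Kato inequality step: one must carefully justify that $|u|$ satisfies the scalar differential inequality with the \emph{same} potential lower bound, controlling the connection terms $\Phi_{y,x}u(y)$ and handling the set where $u(x)=0$ (where $|u|$ is not differentiable) — this requires the regularized modulus $\sqrt{|u|^2+\delta^2}$ and a passage to the limit $\delta\to0$, exactly as in the preliminary section, and getting the constants to line up so that \eqref{E:potential-minorant} with coefficient $\frac12$ is precisely enough is the delicate point.
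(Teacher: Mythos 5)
Your overall reduction is the one the paper uses: shift $W$ so that the quadratic form dominates $\|u\|^2$, invoke the positivity criterion (Theorem X.26 of Reed--Simon) to reduce essential self-adjointness to showing that any $v\in\lw$ with $(\widetilde{H}_{W,\Phi}-\lambda)v=0$ vanishes, and then kill such $v$ by an Agmon-type estimate built from cutoffs adapted to $D(\cdot)$, the intrinsic metric, and regularity. But the analytic core of your argument is set up differently from the paper's, and as written it has two genuine gaps. First, your test functions are not finitely supported: a cutoff $\eta$ equal to $1$ on $X_{\varepsilon}$ and supported in $\{D>\varepsilon/2\}$ has infinite support in general (e.g.\ $X_{\varepsilon}=X$ when the graph is metrically complete), so the summation by parts you invoke (Green's formula, Lemma~\ref{L:Green}(ii)) is not justified. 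Regularity only says that \emph{bounded} subsets of $X_{\varepsilon}$ are finite (Lemma~\ref{L:finsup}); the paper therefore multiplies by a second cutoff $g_{\alpha}=G_{\alpha}(d_{\sigma}(x_0,\cdot))$ so that the support of $f_{\varepsilon}g_{\alpha}$ is a closed bounded subset of $X_{\varepsilon}$, hence finite. You need the analogous device before any of your sums are legitimate.

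Second, and more seriously, the quantitative step does not close with the constant $\tfrac12$ under your choice of cutoff. With a plateau cutoff vanishing at $D=\varepsilon/2$ and equal to $1$ at $D=\varepsilon$, the Caccioppoli error is of order $\sum b(x,y)(\eta(x)-\eta(y))^2f(x)f(y)\approx\sum_{\varepsilon/2<D<\varepsilon}4D^{-2}f^2m$ on the transition region, precisely where the potential term $\tfrac12 D^{-2}\eta^2 f^2 m$ is damped by $\eta^2<1$; a coefficient $\tfrac12$ cannot absorb a $4/D^2$ error. The paper's proof sidesteps this entirely: it does not scalarize via Kato's inequality but applies the ground state transform (Lemma~\ref{L:lemma-ground-state}) \emph{as an identity} to the vector-valued solution $v$, with the cutoff $f_{\varepsilon}(x)=F_{\varepsilon}(D(x))$ where $F_{\varepsilon}(s)\approx s$ for $\rho\le s\le 1$, i.e.\ the cutoff is essentially $\min(D(x),1)$. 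Its Lipschitz constant tends to $1$, the intrinsic-metric condition converts the gradient term into exactly $\tfrac12(\mathrm{Lip})^2\|v\|^2$, and on the other side $\max(D^{-2},1)\min(D,1)^2\equiv 1$, so the hypothesis~(\ref{E:potential-minorant}) with coefficient $\tfrac12$ is matched without loss. Your Kato-scalarization could in principle be salvaged by using the same linear-in-$D$ cutoff together with a subsolution version of the ground state transform, but the plateau cutoff and "Hardy absorption" you describe will not yield the theorem at the stated constant. (A minor further point: your fallback endgame "$f$ constant plus $m$ infinite on infinite paths" appeals to (A1), which is not a hypothesis of this theorem; only the strict-positivity-from-the-shift conclusion is available.)
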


\section{Preliminary Lemmas}\label{S:preli}

\subsection{Green's Formula}
We now give a variant of Green's formula, which is analogous to Lemma 2.1 in~\cite{GKS-13} and Lemma 4.7 in~\cite{HK-2011}.
\begin{notation}\label{adj}
 Let $W(x)\colon F_{x}\to F_{x}$ be a linear operator. We denote by $W^*$  the Hermitian adjoint of $W$, that is, $(W(x))^*$ is the Hermitian adjoint of $W(x)$ with respect to $\langle\cdot,\cdot\rangle_{F_{x}}$.
\end{notation}

\begin{lem}\label{L:Green}  Let  $\widetilde{H}_{W,\Phi}$ be as in~(\ref{E:magnetic-schro}). The following properties hold:
\begin{itemize}
\item [(i)] if $\widetilde{H}_{W,\Phi}[\Gamma_{c}(X,F)]\subseteq\lpw$ for some $1\leq p\leq\infty$, then any $u\in\lqw$ with $1/p+1/p^*=1$ belongs to the set $\widetilde{D}$ defined by~(\ref{E:dom-F});
\item[(ii)] for all $u\in \widetilde{D}$ and all $v\in \Gamma_{c}(X,F)$, the sums
\[
\sum_{x\in X}m(x)\langle \widetilde{H}_{W,\Phi}u,v\rangle_{F_x},\qquad \sum_{x\in X}m(x)\langle u,\widetilde{H}_{W^*,\Phi}v\rangle_{F_x},
\]
 and the expression
\begin{align}\label{E:form-comp-sum}
&\frac{1}{2}\sum_{x,y\in X}b(x,y)\langle u(x)-\Phi_{y,x} u(y),v(x)-\Phi_{y,x} v(y)\rangle_{F_x}\nonumber\\
&+\sum_{x\in X}m(x)\langle W(x)u(x),v(x)\rangle_{F_x}
\end{align}
converge absolutely and agree.
\end{itemize}
\end{lem}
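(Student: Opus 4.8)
The plan is to prove the two assertions in tandem, starting from the combinatorial identity and then justifying convergence. First I would establish the algebraic heart of Green's formula: for a \emph{finitely supported} section $v\in\Gamma_c(X,F)$ and \emph{any} $u\in\widetilde D$, the quantity $\sum_{x\in X}m(x)\langle(\Delta^{F,\Phi}_{b,m}u)(x),v(x)\rangle_{F_x}$ equals the connection-twisted Dirichlet expression $\tfrac12\sum_{x,y}b(x,y)\langle u(x)-\Phi_{y,x}u(y),v(x)-\Phi_{y,x}v(y)\rangle_{F_x}$. Because $v$ has finite support, all the sums over $x$ are finite sums; the only infinite sums are over $y$, and these converge absolutely precisely because $u\in\widetilde D$ (definition \eqref{E:dom-F}) together with property (iii) of $b$ and the unitarity of $\Phi$ (so $|\Phi_{y,x}u(y)|_{F_x}=|u(y)|_{F_y}$). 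Concretely I would write $u(x)-\Phi_{y,x}u(y)$ and $v(x)-\Phi_{y,x}v(y)$, expand the Hermitian product into four terms, and reindex using $b(x,y)=b(y,x)$, $\Phi_{y,x}^{-1}=\Phi_{x,y}$, and $\langle\Phi_{y,x}\xi,\eta\rangle_{F_x}=\langle\xi,\Phi_{x,y}\eta\rangle_{F_y}$; the cross terms combine to reproduce the sum over $y$ of $b(x,y)\langle u(x)-\Phi_{y,x}u(y),v(x)\rangle_{F_x}$ summed against $v$. Adding the potential term $\sum_x m(x)\langle W(x)u(x),v(x)\rangle_{F_x}$ (again a finite sum) gives that \eqref{E:form-comp-sum} equals $\sum_x m(x)\langle\widetilde H_{W,\Phi}u,v\rangle_{F_x}$. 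An entirely symmetric computation, now moving the connection and the adjoint $W^*$ onto $v$, shows \eqref{E:form-comp-sum} also equals $\sum_x m(x)\langle u,\widetilde H_{W^*,\Phi}v\rangle_{F_x}$; here one uses that $\langle\Phi_{y,x}u(y),v(x)\rangle_{F_x}=\langle u(y),\Phi_{x,y}v(x)\rangle_{F_y}$ and that $\langle W(x)u(x),v(x)\rangle=\langle u(x),W^*(x)v(x)\rangle$. This disposes of part (ii), \emph{provided} we already know $u\in\widetilde D$, which is automatic there by hypothesis.

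For part (i) the situation is reversed: we are given $u\in\lqw$ and $\widetilde H_{W,\Phi}[\Gamma_c(X,F)]\subseteq\lpw$, and must \emph{deduce} $u\in\widetilde D$, i.e. $\sum_{y}b(x,y)|u(y)|_{F_y}<\infty$ for each fixed $x$. The idea is to test against a cleverly chosen finitely supported $v$. Fix $x_0\in X$ and choose a unit vector $e\in F_{x_0}$; I would like to pick $v$ so that $\widetilde H_{W,\Phi}v$ is supported near $x_0$ and its pairing with $u$ forces the desired sum to converge. Take $v=\delta_{x_0}\otimes e$, the section equal to $e$ at $x_0$ and $0$ elsewhere. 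Then $(\Delta^{F,\Phi}_{b,m}v)(x)=-\tfrac{1}{m(x)}b(x_0,x)\Phi_{x_0,x}e$ for $x\ne x_0$ and $\tfrac{1}{m(x_0)}\bigl(\sum_y b(x_0,y)\bigr)e$ at $x_0$; adding $Wv$ affects only $x_0$. By property (iii) this section lies in $\Gamma(X,F)$, and by hypothesis it lies in $\lpw$. Now $u\in\lqw$ and $\widetilde H_{W,\Phi}v\in\lpw$, so H\"older's inequality (with the measure $m$ and exponents $p,p^*$) shows $\sum_x m(x)|\langle u(x),(\widetilde H_{W,\Phi}v)(x)\rangle_{F_x}|\le\|u\|_{p^*}\,\|\widetilde H_{W,\Phi}v\|_p<\infty$. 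But $m(x)(\widetilde H_{W,\Phi}v)(x)=-b(x_0,x)\Phi_{x_0,x}e$ for $x\ne x_0$, so (using unitarity of $\Phi$ again) the finiteness of that sum is, up to the single term at $x=x_0$ and a constant, exactly $\sum_{x}b(x_0,x)\,|\langle u(x),\Phi_{x_0,x}e\rangle_{F_x}|<\infty$. Running this over a fixed orthonormal basis $e_1,\dots,e_r$ of $F_{x_0}$ and summing, and noting $|u(x)|_{F_x}\le\sum_j|\langle\Phi_{x,x_0}u(x),e_j\rangle_{F_{x_0}}|=\sum_j|\langle u(x),\Phi_{x_0,x}e_j\rangle_{F_x}|$, yields $\sum_x b(x_0,x)|u(x)|_{F_x}<\infty$, i.e. $u\in\widetilde D$ (where I take $x\mapsto x_0$ appropriately, using symmetry of $b$). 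For the cases $p=1$ (so $p^*=\infty$) and $p=\infty$ the H\"older step degenerates to the obvious $\ell^1$--$\ell^\infty$ pairing, with no change in the argument.

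I expect the main obstacle to be purely bookkeeping: tracking the connection isomorphisms $\Phi_{x,y}$ through the reindexing in part (ii) without sign or direction errors, since $\Phi_{x,y}\colon F_x\to F_y$ is \emph{not} symmetric in its indices and the Dirichlet expression secretly lives in two different fibers ($\langle\cdot,\cdot\rangle_{F_x}$ appears but one summand is $\Phi_{y,x}u(y)\in F_x$). The clean way to handle this is to never move a vector between fibers except via $\Phi$, always record which fiber an inner product is computed in, and use the two rewriting rules $b(x,y)=b(y,x)$ and $\langle\Phi_{y,x}\xi,\eta\rangle_{F_x}=\langle\xi,\Phi_{x,y}\eta\rangle_{F_y}$ as the only manipulations. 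Absolute convergence throughout — which is what lets us reindex the double sum freely — is guaranteed at each stage by $u\in\widetilde D$ plus property (iii) of $b$ for the off-diagonal part, and by finite support of $v$ for everything else, so Fubini/Tonelli for sums applies and all three listed quantities agree.
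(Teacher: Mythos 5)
Your proposal is correct and follows essentially the same route as the paper: part (ii) rests on finite support of $v$, the membership $u\in\widetilde D$, unitarity of $\Phi$, and Fubini for the reindexing, exactly as in the paper's proof. For part (i) the paper phrases the delta-section observation as ``$y\mapsto b(x,y)/m(y)$ lies in $\ell^p_m(X)$'' and applies H\"older directly to the scalar sum $\sum_y b(x,y)|u(y)|_{F_y}$, which is the same argument as your pairing against $\delta_{x_0}\otimes e$ but avoids the detour through an orthonormal basis of the fiber.
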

\begin{proof} To make the notations simpler, throughout the proof we suppress $F_{x}$ in $|\cdot|_{F_x}$. From the assumption $\widetilde{H}_{W,\Phi}[\Gamma_{c}(X,F)]\subseteq \lpw$, it is easily seen that the function $y\mapsto b(x, y)/m(y)$ belongs to $\ell_{m}^{p}(X)$, for all  $x\in X$. In the case $1<p^*<\infty$, for all $u\in\lqw$, by H\"older's inequality with $1/p+1/p^*=1$ we have
\begin{align*}\label{E:quad-form-convergence}
&\sum_{y\in X}b(x,y)|u(y)|\leq \left(\sum_{y\in X}\left(\frac{b(x,y)}{m(y)}\right)^{p}m(y)\right)^{1/p}\left(\sum_{y\in X}|u(y)|^{p^*}m(y)\right)^{1/p^*}.
\end{align*}
In the case $p^*=1$, for all $u\in \Gamma_{\ell_{m}^{1}}(X,F)$, by H\"older's inequality with $p=\infty$ and $p^*=1$ we have
\begin{equation}\label{E:quad-form-convergence-1}\nonumber
\sum_{y\in X}b(x,y)|u(y)|\leq \sup_{y\in X}\left(\frac{b(x,y)}{m(y)}\right)\left(\sum_{y\in X}|u(y)|{m(y)}\right).
\end{equation}
In the case $p^*=\infty$, for all $u\in \Gamma_{\ell^{\infty}}(X,F)$, by H\"older's inequality with $p=1$ and $p^*=\infty$ we have
\begin{equation}\label{E:quad-form-convergence-2}\nonumber
\sum_{y\in X}b(x,y)|u(y)|\leq \sup_{y\in X}\left(|u(y)|\right)\left(\sum_{y\in X}b(x,y)\right).
\end{equation}
This concludes the proof of property (i). Let us prove  property (ii).
Since $v\in \Gamma_{c}(X,F)$, the first sum is performed over finitely many $x\in X$. Hence, this sum converges absolutely. The proof of absolute convergence of the second sum and the expression~(\ref{E:form-comp-sum}) is based on the next two estimates. By Cauchy--Schwarz inequality and unitarity of $\Phi_{y,x}$ we get
\begin{align*}
&\sum_{x,y\in X}|b(x,y)\langle u(x),\Phi_{y,x} v(y)\rangle_{F_x}|\leq
\sum_{y\in X}|v(y)|\left(\sum_{x\in X}b(x,y)|u(x)|\right)<\infty,
\end{align*}
where the convergence follows from the fact that $u\in \widetilde{D}$ and $v\in \Gamma_{c}(X,F)$. Similarly,
\begin{align*}
&\sum_{x,y\in X}|b(x,y)\langle u(x), v(x)\rangle_{F_x}|\leq
\sum_{x\in X}|u(x)||v(x)|\left(\sum_{y\in X}b(x,y)\right)<\infty,
\end{align*}
where the convergence follows by property (iii) of $b(x,y)$ and since $v\in \Gamma_{c}(X,F)$.
The equality of the three sums follows directly from Fubini's theorem. This shows property (ii).
\end{proof}

\subsection{Kato's Inequality} This version of Kato's inequality extends that of~\cite{Dodziuk-Mathai-03}. 

\begin{lem}\label{L:kato-discrete} Let $\Delta_{b,m}$ and $\Delta^{F,\Phi}_{b,m}$ be defined as in~(\ref{E:ord-lap}) and~(\ref{E:magnetic-lap}) respectively. Then,  the following pointwise inequality holds for all $u\in \widetilde{D}$:
\begin{equation}\label{E:kato-discrete}
|u|(\Delta_{b,m}|u|)\leq \textrm{Re}\,\langle \Delta^{F,\Phi}_{b,m} u,u\rangle_{F_x},
\end{equation}
where $|\cdot|$ denotes the norm in $F_{x}$, and $\textrm{Re}\,z$ denotes the real part of a complex number $z$.
\end{lem}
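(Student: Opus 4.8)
The plan is to reduce the bundle inequality to a pointwise computation at each fixed $x\in X$, exploiting the unitarity of the connection maps $\Phi_{y,x}$. First I would write out both sides of~(\ref{E:kato-discrete}) using the defining formulas~(\ref{E:ord-lap}) and~(\ref{E:magnetic-lap}). The left-hand side is
\[
|u(x)|\cdot\frac{1}{m(x)}\sum_{y\in X}b(x,y)\bigl(|u(x)|-|u(y)|\bigr),
\]
while the right-hand side is
\[
\operatorname{Re}\,\frac{1}{m(x)}\sum_{y\in X}b(x,y)\bigl\langle u(x)-\Phi_{y,x}u(y),\,u(x)\bigr\rangle_{F_x}
=\frac{1}{m(x)}\sum_{y\in X}b(x,y)\bigl(|u(x)|^2-\operatorname{Re}\langle\Phi_{y,x}u(y),u(x)\rangle_{F_x}\bigr).
\]
Both sums converge absolutely because $u\in\widetilde D$ and $|\Phi_{y,x}u(y)|_{F_x}=|u(y)|_{F_y}$ by unitarity. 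Subtracting, it suffices to prove termwise, for each $y\sim x$, the scalar inequality
\[
|u(x)|\bigl(|u(x)|-|u(y)|\bigr)\ \le\ |u(x)|^2-\operatorname{Re}\langle\Phi_{y,x}u(y),u(x)\rangle_{F_x},
\]
which after cancelling $|u(x)|^2$ is equivalent to
\[
\operatorname{Re}\langle\Phi_{y,x}u(y),u(x)\rangle_{F_x}\ \le\ |u(x)|\,|u(y)|.
\]

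The second step is just this last inequality, and it is immediate: by Cauchy--Schwarz in $F_x$ and unitarity of $\Phi_{y,x}$,
\[
\operatorname{Re}\langle\Phi_{y,x}u(y),u(x)\rangle_{F_x}\ \le\ \bigl|\langle\Phi_{y,x}u(y),u(x)\rangle_{F_x}\bigr|\ \le\ |\Phi_{y,x}u(y)|_{F_x}\,|u(x)|_{F_x}\ =\ |u(y)|_{F_y}\,|u(x)|_{F_x}.
\]
Summing the termwise inequalities over $y\in X$ and dividing by $m(x)$ gives~(\ref{E:kato-discrete}). A small point to handle carefully is the case $u(x)=0$: then the left-hand side of~(\ref{E:kato-discrete}) is $0$, while the right-hand side equals $\frac{1}{m(x)}\sum_y b(x,y)\bigl(0-\operatorname{Re}\langle\Phi_{y,x}u(y),0\rangle\bigr)=0$, so the inequality holds trivially (in fact with equality); thus multiplying through by $|u(x)|$ in the reduction above causes no loss even when $|u(x)|=0$.

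I do not expect any genuine obstacle here; the only thing requiring a line of care is the bookkeeping of absolute convergence, which follows from $u\in\widetilde D$ together with the isometry property $|\Phi_{y,x}u(y)|_{F_x}=|u(y)|_{F_y}$, so that every rearrangement of the (at most countable) sums over $y$ is justified. The essential content is simply the elementary observation that $\operatorname{Re}\langle\Phi_{y,x}u(y),u(x)\rangle_{F_x}\le|u(x)|_{F_x}|u(y)|_{F_y}$, i.e.\ a fiberwise Cauchy--Schwarz estimate transported through the unitary connection maps; this is the exact analogue of the classical Kato inequality proof, where the Cauchy--Schwarz step plays the same role, and it specializes to the scalar magnetic case when one writes $\Phi_{y,x}=e^{i\theta(x,y)}$.
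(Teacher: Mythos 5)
Your proposal is correct and follows essentially the same route as the paper: subtract the two sides, cancel the $|u(x)|^2$ terms, and bound the remaining summand $\operatorname{Re}\langle\Phi_{y,x}u(y),u(x)\rangle_{F_x}-|u(x)|\,|u(y)|$ termwise by Cauchy--Schwarz together with the unitarity of $\Phi_{y,x}$. The extra remarks on absolute convergence and the case $u(x)=0$ are fine but not needed beyond what the paper already implicitly uses.
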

\begin{proof}
Using~(\ref{E:ord-lap}), ~(\ref{E:magnetic-lap}), and the unitarity of $\Phi_{y,x}$, we obtain\\
$
|u(x)|((\Delta_{b,m}|u|)(x))-\textrm{Re}\,\langle \Delta^{F,\Phi}_{b,m} u(x),u(x)\rangle_{F_x}\\
=\frac{1}{m(x)}\sum_{y\in X}b(x,y)\left[
\textrm{Re}\,\langle \Phi_{y,x}u(y),u(x)\rangle_{F_x}-|u(x)||u(y)|\right]\leq 0.
$
\end{proof}
\subsection{Ground State Transform}
Using the definition of $\widetilde{H}_{W,\Phi}$ and unitarity of $\Phi_{y,x}$, it is easy to prove the following vector-bundle analogue of ``ground state transform" from~\cite{flw}, ~\cite{GKS-13}, and~\cite{HK-2011}. We omit the proof here.
\begin{lem}\label{L:lemma-ground-state}   Assume that $W(x)\colon F_{x}\to F_{x}$ is a self-adjoint operator. Assume that~(\ref{E:assumption-vcomp-lwp}) is satisfied for $p=2$. Let $\lambda\in\mathbb{R}$, and  let $u\in\widetilde{D}$ so that
\begin{equation}\nonumber
(\widetilde{H}_{W,\Phi}-\lambda)u=0.
\end{equation}
Then, for all finitely supported functions $g\colon X\to\mathbb{R}$, we have
\begin{align*}\nonumber
&((\widetilde{H}_{W,\Phi}-\lambda)(gu),gu)=\frac{1}{2}\sum_{x,y\in X}b(x,y)(g(x)-g(y))^2(\textrm{Re}\,\langle u(x),\Phi_{y,x} u(y)\rangle_{F_x}).
\end{align*}
\end{lem}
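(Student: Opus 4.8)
The plan is to reduce the identity to a quadratic-form computation via Green's formula and then use the eigenvalue equation $(\widetilde{H}_{W,\Phi}-\lambda)u=0$ to cancel the potential and spectral contributions. First I would note that, since $g$ is finitely supported, $gu\in\Gamma_{c}(X,F)$, so by the hypothesis \eqref{E:assumption-vcomp-lwp} (with $p=2$) the section $\widetilde{H}_{W,\Phi}(gu)$ lies in $\lw$ and the inner product on the left is well defined. Applying \refl{Green}(ii) with both arguments equal to $gu$ then gives, with absolute convergence guaranteed by that lemma and using that $g$ is real to extract the scalar factors,
\[
(\widetilde{H}_{W,\Phi}(gu),gu)=\tfrac{1}{2}\sum_{x,y\in X}b(x,y)\,|g(x)u(x)-g(y)\Phi_{y,x}u(y)|^{2}_{F_x}+\sum_{x\in X}m(x)\,g(x)^{2}\langle W(x)u(x),u(x)\rangle_{F_x}.
\]

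Next I would expand the first (hopping) sum. Using the isometry $|\Phi_{y,x}u(y)|_{F_x}=|u(y)|_{F_y}$ from unitarity of $\Phi$, the squared norm equals $g(x)^{2}|u(x)|^{2}+g(y)^{2}|u(y)|^{2}-2g(x)g(y)\,\textrm{Re}\langle u(x),\Phi_{y,x}u(y)\rangle_{F_x}$. Relabelling $x\leftrightarrow y$ in the second term and invoking $b(x,y)=b(y,x)$, the hopping sum collapses to $S_{1}-S_{3}$, where $S_{1}:=\sum_{x,y}b(x,y)g(x)^{2}|u(x)|^{2}$ and $S_{3}:=\sum_{x,y}b(x,y)g(x)g(y)\textrm{Re}\langle u(x),\Phi_{y,x}u(y)\rangle_{F_x}$. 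I would expand the target right-hand side in the same spirit: writing $(g(x)-g(y))^{2}=g(x)^{2}+g(y)^{2}-2g(x)g(y)$ and relabelling, it becomes $T_{1}-S_{3}$, where $T_{1}:=\sum_{x,y}b(x,y)g(x)^{2}\textrm{Re}\langle u(x),\Phi_{y,x}u(y)\rangle_{F_x}$, the cross term $S_{3}$ being common to both sides. Subtracting $\lambda(gu,gu)=\lambda\sum_{x}m(x)g(x)^{2}|u(x)|^{2}$, the whole claim reduces to the scalar equality $S_{1}+\sum_{x}m(x)g(x)^{2}\langle W(x)u(x),u(x)\rangle_{F_x}-\lambda\sum_{x}m(x)g(x)^{2}|u(x)|^{2}=T_{1}$.

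To close this I would rewrite the difference of the diagonal sums as $S_{1}-T_{1}=\sum_{x}g(x)^{2}\,\textrm{Re}\langle u(x),\sum_{y}b(x,y)(u(x)-\Phi_{y,x}u(y))\rangle_{F_x}=\sum_{x}m(x)g(x)^{2}\,\textrm{Re}\langle u(x),(\Delta^{F,\Phi}_{b,m}u)(x)\rangle_{F_x}$, recognizing the inner sum as $m(x)$ times the Laplacian \eqref{E:magnetic-lap}. The eigenvalue equation gives $\Delta^{F,\Phi}_{b,m}u=\lambda u-Wu$ pointwise, and self-adjointness of $W(x)$ makes $\langle W(x)u(x),u(x)\rangle_{F_x}$ real, so $\textrm{Re}\langle u(x),W(x)u(x)\rangle_{F_x}=\langle W(x)u(x),u(x)\rangle_{F_x}$. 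Substituting yields $S_{1}-T_{1}=\lambda\sum_{x}m(x)g(x)^{2}|u(x)|^{2}-\sum_{x}m(x)g(x)^{2}\langle W(x)u(x),u(x)\rangle_{F_x}$, which is precisely the required scalar equality.

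The main obstacle — really the only step demanding care — is the symmetrization showing that the two diagonal sums $\sum_{x,y}b(x,y)g(x)^{2}\textrm{Re}\langle u(x),\Phi_{y,x}u(y)\rangle_{F_x}$ and $\sum_{x,y}b(x,y)g(y)^{2}\textrm{Re}\langle u(x),\Phi_{y,x}u(y)\rangle_{F_x}$ coincide after relabelling, which I use to collapse the expansion of the target side to $T_{1}-S_{3}$. This rests on the cross-fiber identity $\textrm{Re}\langle u(y),\Phi_{x,y}u(x)\rangle_{F_y}=\textrm{Re}\langle u(x),\Phi_{y,x}u(y)\rangle_{F_x}$, which follows from the adjoint relation $\Phi_{x,y}^{*}=\Phi_{x,y}^{-1}=\Phi_{y,x}$ (unitarity) together with $\textrm{Re}\,z=\textrm{Re}\,\bar z$. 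Keeping track of which fiber each inner product lives in is the single subtle point; the analogous symmetrization of the $|u|^{2}$-sums (where no connection intervenes) and all remaining manipulations are routine, their convergence being provided by \refl{Green}.
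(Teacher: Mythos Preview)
Your proof is correct. The paper actually omits the proof entirely, merely noting that it follows ``using the definition of $\widetilde{H}_{W,\Phi}$ and unitarity of $\Phi_{y,x}$''; your argument via Green's formula, direct expansion of the hopping term, and the cross-fiber identity $\textrm{Re}\langle u(y),\Phi_{x,y}u(x)\rangle_{F_y}=\textrm{Re}\langle u(x),\Phi_{y,x}u(y)\rangle_{F_x}$ fills in precisely these details.
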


\section{Proof of Theorem~\ref{T:main-1}}\label{S:kato-lp-1}
In Lemmas~\ref{L:hpmin-accretive} and~\ref{L:hpmin-dense-range} below, we assume that the hypotheses of Theorem~\ref{T:main-1} are satisfied.

\begin{lem}\label{L:hpmin-accretive} Let $1\leq p<\infty$. Then, the operator $\hpmin$ satisfies the following
inequality for all $u\in \Gamma_{c}(X,F)$:
\begin{equation}\label{E:accretive-hpmin}
\textit{Re }\sum_{x\in X}  m(x)\langle (\hpmin u)(x),u(x)|u(x)|^{p-2}\rangle_{F_x}\geq 0.
\end{equation}
\end{lem}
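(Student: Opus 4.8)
The plan is to expand the inner product using the definition $\hpmin u = \Delta^{F,\Phi}_{b,m}u + Wu$ and treat the two summands separately. The potential term contributes
\[
\operatorname{Re}\sum_{x\in X} m(x)\langle W(x)u(x),\, u(x)|u(x)|^{p-2}\rangle_{F_x}
= \sum_{x\in X} m(x)|u(x)|^{p-2}\operatorname{Re}\langle W(x)u(x),u(x)\rangle_{F_x} \geq 0,
\]
which is nonnegative term-by-term by hypothesis~(\ref{E:bdd-below-w}) and the fact that $|u(x)|^{p-2}\geq 0$ (with the usual convention that the term vanishes where $u(x)=0$). So the whole difficulty is to show that the Laplacian contributes a nonnegative quantity as well, i.e.
\[
\operatorname{Re}\sum_{x\in X} m(x)\langle (\Delta^{F,\Phi}_{b,m}u)(x),\, u(x)|u(x)|^{p-2}\rangle_{F_x}\geq 0.
\]

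For this I would combine Kato's inequality (Lemma~\ref{L:kato-discrete}) with the scalar computation. Kato's inequality gives, pointwise,
\[
\operatorname{Re}\langle (\Delta^{F,\Phi}_{b,m}u)(x),u(x)\rangle_{F_x} \geq |u(x)|\,(\Delta_{b,m}|u|)(x),
\]
but here the test section is $u(x)|u(x)|^{p-2}$ rather than $u(x)$, so I first need a weighted version. Writing $f = |u|\colon X\to[0,\infty)$, the natural route is to show
\[
\operatorname{Re}\langle (\Delta^{F,\Phi}_{b,m}u)(x),\, u(x)\rangle_{F_x}\, f(x)^{p-2} \;\geq\; f(x)^{p-1}(\Delta_{b,m}f)(x),
\]
which follows from the same elementary estimate $\operatorname{Re}\langle \Phi_{y,x}u(y),u(x)\rangle_{F_x}\leq |u(x)||u(y)| = f(x)f(y)$ used in the proof of Lemma~\ref{L:kato-discrete}, after multiplying through by the nonnegative factor $f(x)^{p-2}/m(x)$ and summing in $y$; one must be a little careful at vertices where $f(x)=0$, but there the left side is $0$ and the right side is $\le 0$, so the inequality still holds. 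Summing over $x$ with weight $m(x)$ then reduces the problem to the purely scalar inequality
\[
\sum_{x\in X} m(x)\, f(x)^{p-1}(\Delta_{b,m}f)(x)\geq 0
\]
for finitely supported $f\geq 0$. This is a standard fact: using the symmetry $b(x,y)=b(y,x)$ to perform Green's formula (a discrete summation by parts over the finite support of $f$), the left side equals $\tfrac12\sum_{x,y}b(x,y)(f(x)-f(y))(f(x)^{p-1}-f(y)^{p-1})$, and each summand is nonnegative because $t\mapsto t^{p-1}$ is monotone increasing on $[0,\infty)$ for $p\geq 1$.

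The main obstacle is the bookkeeping around the weight $|u(x)|^{p-2}$ when $1<p<2$, where this factor blows up near the zero set of $u$; one has to verify that all rearrangements of sums are legitimate (all the series here are finite sums since $u$ and hence $f$ have finite support, so absolute convergence is automatic, but the pointwise manipulations at zeros of $u$ need the stated convention) and that the pointwise Kato-type inequality with the extra factor is genuinely valid there. Everything else is a routine adaptation of the scalar argument together with Lemma~\ref{L:kato-discrete}. Note that hypothesis~(\ref{E:assumption-vcomp-lwp}) is used only to guarantee $\hpmin$ is well defined on $\Gamma_c(X,F)$, and (A1) plays no role in this particular lemma.
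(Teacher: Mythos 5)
Your proof is correct, and it reaches the same destination by a slightly different road. The paper applies its vector-bundle Green's formula (Lemma~\ref{L:Green}(ii), with $W=0$ and test section $v=u|u|^{p-2}$) \emph{first}, obtaining a symmetric double sum over pairs $(x,y)$, then estimates the cross terms by Cauchy--Schwarz and unitarity of $\Phi$, and finally invokes Young's inequality to get $|u(x)|^p+|u(y)|^p\ge |u(x)||u(y)|^{p-1}+|u(y)||u(x)|^{p-1}$. You instead do the pointwise estimate first (a weighted Kato inequality, obtained by multiplying \eqref{E:kato-discrete} by $|u(x)|^{p-2}$ where $u(x)\neq 0$), reduce to the scalar quantity $\sum_x m(x)|u|^{p-1}(\Delta_{b,m}|u|)(x)$, and then symmetrize on the scalar side and use monotonicity of $t\mapsto t^{p-1}$. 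The two arguments rest on exactly the same two ingredients --- the bound $\mathrm{Re}\,\langle\Phi_{y,x}u(y),u(x)\rangle_{F_x}\le|u(x)||u(y)|$ and the scalar inequality $a^p+b^p\ge ab^{p-1}+a^{p-1}b$ for $a,b\ge 0$ --- since $(a-b)(a^{p-1}-b^{p-1})\ge 0$ expands to precisely that inequality; your monotonicity argument is, if anything, more elementary than the paper's appeal to Young. Your handling of the zero set of $u$ for $1<p<2$ and $p=1$ is adequate (all sums are finite, and at vertices with $u(x)=0$ the left side of your weighted Kato inequality vanishes while the right side is nonpositive), and your closing remark is accurate: only \eqref{E:bdd-below-w} and the well-definedness of $\hpmin$ via \eqref{E:assumption-vcomp-lwp} are used here, and (A1) plays no role in this lemma.
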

\begin{proof} Let $u\in \Gamma_{c}(X,F)$ be arbitrary. By Lemma~\ref{L:Green}(ii) with $W=0$, $u\in \Gamma_{c}(X,F)$ and $v:=u|u|^{p-2}$, we have
\begin{align}\label{E:young-accretive}
&\textrm{Re}\sum_{x\in X}m(x)\langle (\Delta^{F,\Phi}_{b,m}u)(x),u(x)|u(x)|^{p-2}\rangle_{F_{x}}=\frac{1}{2}\sum_{x,y\in X}b(x,y)\left[|u(x)|^{p}\right.\nonumber\\
&\left.+|u(y)|^{p}-\textrm{Re}\langle \Phi_{y,x}u(y),u(x)|u(x)|^{p-2}\rangle_{F_{x}}-\textrm{Re}\langle \Phi_{x,y}u(x),u(y)|u(y)|^{p-2}\rangle_{F_y}\right]\nonumber\\
&\geq
\frac{1}{2}\sum_{x,y\in X}b(x,y)\left[|u(x)|^{p}+|u(y)|^{p}-|u(x)||u(y)|^{p-1}\right.\nonumber\\
&\left.-|u(y)||u(x)|^{p-1}\right].
\end{align}

For $p=1$, from~(\ref{E:young-accretive}) and the assumption~(\ref{E:bdd-below-w}) we easily get~(\ref{E:accretive-hpmin}).

Let $1<p<\infty$ and let $p^*$ satisfy $1/p+1/p^*=1$. By Young's inequality we have
\begin{equation}\label{E:young-1}\nonumber
|u(x)||u(y)|^{p-1}\leq \frac{|u(x)|^{p}}{p}+\frac{(|u(y)|^{p-1})^{p^*}}{p^*}=\frac{|u(x)|^{p}}{p}+\frac{(p-1)|u(y)|^{p}}{p}
\end{equation}
and, likewise,
\begin{equation}\label{E:young-2}\nonumber
|u(y)||u(x)|^{p-1}\leq \frac{|u(y)|^{p}}{p}+\frac{(p-1)|u(x)|^{p}}{p}.
\end{equation}

From the last two inequalities we get
\begin{equation}\label{E:young-3}
-|u(x)||u(y)|^{p-1}-|u(y)||u(x)|^{p-1}\geq -|u(x)|^{p}-|u(y)|^{p}.
\end{equation}
Using~(\ref{E:young-3}),~(\ref{E:young-accretive}), and the assumption~(\ref{E:bdd-below-w}), we obtain~(\ref{E:accretive-hpmin}).
\end{proof}

The following lemma is a special case of Proposition 8 in~\cite{Keller-Lenz-10}:
\begin{lem}\label{L:keller-lenz-inequality} Assume (A1). Let $\alpha > 0$ and $1\leq p<\infty$. Let $\Delta_{b,m}$ be as in~(\ref{E:ord-lap}). Assume that $u\in \ell^{p}_{m}(X)$ is a real-valued function satisfying the inequality $(\Delta_{b,m}+\alpha)u\geq 0$. Then $u\geq 0$.
\end{lem}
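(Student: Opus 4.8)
The plan is to prove Lemma~\ref{L:keller-lenz-inequality} by showing that under (A1), any nonnegative answer reduces to controlling the behavior of $u$ along paths, and then citing the structure already available. Concretely, the statement claims: if $u \in \ell^p_m(X)$ is real-valued and $(\Delta_{b,m} + \alpha) u \geq 0$ with $\alpha > 0$, then $u \geq 0$. Since the lemma is explicitly stated to be ``a special case of Proposition 8 in~\cite{Keller-Lenz-10},'' the honest approach is to indicate the specialization rather than reprove the cited result from scratch. That said, I will sketch the self-contained argument one would give.

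First I would argue by contradiction: suppose $\inf_{x \in X} u(x) < 0$. Since $u \in \ell^p_m(X)$ and $m(x) > 0$ for all $x$, we have $u(x) \to 0$ ``at infinity'' in the sense that for every $\varepsilon > 0$ the set $\{x : |u(x)| \geq \varepsilon\}$ has finite $m$-measure, hence is finite (as $m$ is bounded below on finite sets trivially, but one needs that $\{x: |u(x)|^p m(x) \geq \varepsilon^p \inf\}$... more carefully: $\sum_x m(x)|u(x)|^p < \infty$ forces $m(x)|u(x)|^p \to 0$ along any exhaustion, so only finitely many $x$ have $m(x)|u(x)|^p$ bounded below, but since individual $m(x)$ need not be bounded below this requires a bit of care — one uses that on any infinite subset the sum would diverge if $|u|$ stayed away from $0$). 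The upshot is that the negative part of $u$ attains its infimum at some vertex $x_0 \in X$, i.e. there exists $x_0$ with $u(x_0) = \min_{x} u(x) =: -c < 0$.

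Next I would evaluate the hypothesis at $x_0$. At a global minimum, $(\Delta_{b,m} u)(x_0) = \frac{1}{m(x_0)} \sum_{y} b(x_0,y)(u(x_0) - u(y)) \leq 0$, since $u(x_0) - u(y) \leq 0$ for all $y$. Therefore $(\Delta_{b,m} u)(x_0) + \alpha u(x_0) \leq \alpha u(x_0) = -\alpha c < 0$, contradicting $(\Delta_{b,m} + \alpha) u \geq 0$. This immediately gives $u \geq 0$, \emph{provided} the minimum is actually attained. The role of (A1) is precisely to guarantee this: without it, $u$ could ``escape to the boundary'' — the infimum need not be attained, and one must instead run a maximum-principle-type argument along an infinite path $\{x_n\}$ with $x_n \sim x_{n+1}$ where $(A1)$'s divergence $\sum_n m(x_n) = \infty$ is used to derive a contradiction with $u \in \ell^p_m$.

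The main obstacle is therefore handling the non-locally-finite / non-attained-infimum case rigorously: one sets $-c := \inf_X u$, picks $x_0$ with $u(x_0)$ close to $-c$, and uses the inequality at $x_0$ to force a neighbor $x_1$ with $u(x_1)$ also close to $-c$ (quantitatively, $\alpha(u(x_0)+c)$ small forces $\sum_y \frac{b(x_0,y)}{m(x_0)}(u(x_0)-u(y))$ small and bounded below, pushing some neighbor's value down toward $-c$), iterate to build a path along which $u$ stays bounded away from $0$, and then $\sum_n m(x_n)|u(x_n)|^p \gtrsim c^p \sum_n m(x_n) = \infty$ by (A1), contradicting $u \in \ell^p_m(X)$. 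Since all of this is exactly Proposition~8 of~\cite{Keller-Lenz-10} in the special case $w = 0$, the cleanest writeup is to state the reduction and invoke that proposition; I would include the one-line global-minimum computation above as the conceptual heart and defer the escape-to-infinity bookkeeping to the reference.
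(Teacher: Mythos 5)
Your proposal matches the paper exactly: the paper gives no proof of this lemma at all, stating only that it is ``a special case of Proposition 8 in~\cite{Keller-Lenz-10}'', so deferring to that reference is precisely what the authors do. Two cautions about your supplementary sketch, though. First, the claim that $u\in\ell^p_m(X)$ forces the infimum to be attained is false when $\inf_x m(x)=0$, and (A1) does not rescue it (on a star graph (A1) holds for every choice of $m$, yet one can make $\inf u$ unattained), so the argument must go directly through the path construction rather than through a global minimum. Second, in that construction the contradiction $\sum_n m(x_n)|u(x_n)|^p\leq\|u\|_p^p<\infty$ requires the vertices $x_n$ to be pairwise distinct; this is easy to arrange, since at any $x_n$ with $u(x_n)<0$ the hypothesis gives $(\Delta_{b,m}u)(x_n)\geq -\alpha u(x_n)>0$, hence some neighbor $x_{n+1}$ with $u(x_{n+1})<u(x_n)$ strictly, and strict monotonicity of the values forces distinctness — but your version, which only keeps the values ``close to $-c$'', does not by itself rule out revisiting vertices.
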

\begin{rem} \emph{The case $p=\infty$ is more complicated and involves the notion of stochastic completeness; see, for instance,
 \cite{Huang-11}, \cite{Keller-Lenz-10}, \cite{Keller-Lenz-09}.}
\end{rem}

In the remainder of this section and in section~\ref{SS:main-1-1}, we will use certain arguments of Section A
in~\cite{Kato86} and~\cite{Milatovic-10} in our setting. In the sequel, $\textrm{Ran }T$ denotes the range of an operator $T$.

\begin{lem}\label{L:hpmin-dense-range} Let $1<p<\infty$  and let $\lambda\in\mathbb{C}$ with $\textrm{Re }\lambda>0$. Then, $\textrm{Ran }(\hpmin+\lambda)$ is dense in $\lwp$.
\end{lem}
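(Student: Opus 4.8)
The plan is to show denseness of $\textrm{Ran}(\hpmin+\lambda)$ by a duality argument: since $\lwp$ is reflexive for $1<p<\infty$, it suffices to prove that the only $v\in\lqw$ (with $1/p+1/p^*=1$) that annihilates $\textrm{Ran}(\hpmin+\lambda)$ is $v=0$. So suppose $v\in\lqw$ satisfies
\[
\sum_{x\in X}m(x)\langle(\hpmin+\lambda)u(x),v(x)\rangle_{F_x}=0\qquad\text{for all }u\in\Gamma_c(X,F).
\]
By Lemma~\ref{L:Green}(i), the hypothesis $\widetilde{H}_{W,\Phi}[\Gamma_c(X,F)]\subseteq\lpw$ forces $v\in\widetilde{D}$, so Green's formula (Lemma~\ref{L:Green}(ii), applied with $W^*$ in place of $W$ — here one uses that $\widetilde H_{W^*,\Phi}$ has the appropriate mapping property, or reruns the symmetric computation) lets us move the operator onto $v$, yielding
\[
\sum_{x\in X}m(x)\langle u(x),(\widetilde{H}_{W^*,\Phi}+\overline{\lambda})v(x)\rangle_{F_x}=0\qquad\text{for all }u\in\Gamma_c(X,F).
\]
Since this holds for every finitely supported $u$, pointwise we get $(\widetilde{H}_{W^*,\Phi}+\overline{\lambda})v=0$, i.e.\ $v$ is a distributional (here just pointwise) solution in $\lqw\cap\widetilde{D}$ of the adjoint equation.

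Next I would convert this into a scalar statement about $|v|$ via Kato's inequality. Write $\mu:=\textrm{Re}\,\lambda>0$. From $\Delta^{F,\Phi}_{b,m}v = -W^*v-\overline\lambda v$ and Lemma~\ref{L:kato-discrete} (Kato's inequality, valid for $v\in\widetilde D$),
\[
|v|(\Delta_{b,m}|v|)\leq \textrm{Re}\,\langle\Delta^{F,\Phi}_{b,m}v,v\rangle_{F_x}
= -\textrm{Re}\,\langle W^*(x)v(x),v(x)\rangle_{F_x}-\mu|v(x)|^2.
\]
By the hypothesis~(\ref{E:bdd-below-w}) on $W$ — noting that $\textrm{Re}\,\langle W^*(x)\xi,\xi\rangle_{F_x}=\textrm{Re}\,\langle\xi,W(x)\xi\rangle_{F_x}=\textrm{Re}\,\langle W(x)\xi,\xi\rangle_{F_x}\ge 0$ — the first term on the right is $\le 0$, hence
\[
|v|(\Delta_{b,m}|v|)\leq -\mu|v|^2,\qquad\text{so}\qquad (\Delta_{b,m}+\mu)|v|\leq 0
\]
at every $x$ with $v(x)\ne 0$, and trivially also where $v(x)=0$ after checking the sign of $(\Delta_{b,m}|v|)(x)$ there (at a zero of $|v|$, $(\Delta_{b,m}|v|)(x)=-\frac{1}{m(x)}\sum_y b(x,y)|v(y)|\le 0$, so the inequality persists). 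Thus $w:=-|v|$ is a real-valued function in $\ell^p_m(X)$ with $(\Delta_{b,m}+\mu)w\ge 0$ and $w\le 0$.

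Finally I would invoke Lemma~\ref{L:keller-lenz-inequality}: under hypothesis (A1), with $\alpha=\mu>0$ and $1\le p<\infty$, a real-valued $w\in\ell^p_m(X)$ satisfying $(\Delta_{b,m}+\mu)w\ge 0$ must have $w\ge 0$. Combined with $w\le 0$ this gives $w\equiv 0$, i.e.\ $|v|\equiv 0$, so $v=0$. Hence the annihilator of $\textrm{Ran}(\hpmin+\lambda)$ in $\lqw\cong(\lpw)^*$ is trivial, and by reflexivity $\textrm{Ran}(\hpmin+\lambda)$ is dense in $\lwp$. The main obstacle I anticipate is the bookkeeping in the duality/Green's-formula step: one must be careful that $v\in\lqw$ genuinely lies in $\widetilde D$ (this is exactly Lemma~\ref{L:Green}(i)) and that the Green identity is applicable with the operator transferred onto $v$ and $W$ replaced by $W^*$; also the Kato-inequality step at the zeros of $v$ needs the small separate sign check indicated above so that $(\Delta_{b,m}+\mu)|v|\le 0$ holds on all of $X$, not merely where $v$ is nonzero.
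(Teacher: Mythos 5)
Your argument is correct and follows essentially the same route as the paper: an annihilating functional $v\in\lqw$, Lemma~\ref{L:Green} to justify $v\in\widetilde{D}$ and transfer the operator onto $v$ as $(\overline{\lambda}+\widetilde{H}_{W^*,\Phi})v=0$, Kato's inequality together with~(\ref{E:bdd-below-w}) to obtain $(\Delta_{b,m}+\textrm{Re}\,\lambda)|v|\leq 0$ everywhere (including the separate sign check at zeros of $v$), and Lemma~\ref{L:keller-lenz-inequality} to conclude $v=0$. The only slip is cosmetic: $|v|$ lies in $\ell^{p^*}_{m}(X)$ rather than $\ell^{p}_{m}(X)$, which is harmless since Lemma~\ref{L:keller-lenz-inequality} is valid for every exponent in $[1,\infty)$.
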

\begin{proof}
Let $u\in(\lpw)^*=\lqw$, be a continuous linear functional that annihilates
$(\lambda+H_{p,\min})\Gamma_{c}(X,F)$:
\begin{equation}\label{E:v-dual}
\sum_{x\in X} m(x)\langle(\lambda+\hpmin) v(x),u(x)\rangle_{F_x}=0,\qquad\textrm{for all
}v\in\Gamma_{c}(X,F).
\end{equation}
By assumption~(\ref{E:assumption-vcomp-lwp}) we know that $\widetilde{H}_{W,\Phi}v\in\lpw$. Since $u\in\lqw$, by Lemma~\ref{L:Green}(i) we have $u\in\widetilde{D}$. Now using Lemma~\ref{L:Green}(ii) in~(\ref{E:v-dual}), we get
\begin{equation}\nonumber
\sum_{x\in X} m(x)\langle v(x),(\overline{\lambda}+\widetilde{H}_{W^*,\Phi}) u(x)\rangle_{F_x}=0,\qquad\textrm{for all
}v\in \Gamma_{c}(X,F),
\end{equation}
where  $\overline{\lambda}$ is the complex conjugate of $\lambda$. The last equality leads to
\begin{equation}\label{E:prep-kato}
(\bar{\lambda}+ \Delta^{F,\Phi}_{b,m}+W^*)u=0.
\end{equation}
Using Kato's inequality~(\ref{E:kato-discrete}), assumption~(\ref{E:bdd-below-w}), and~(\ref{E:prep-kato}) we have
\begin{align*}
&|u|(\Delta_{b,m}|u|)\leq \textrm{Re}\,\langle\Delta^{F,\Phi}_{b,m}u,u\rangle_{F_{x}}\nonumber\\
&=-(\textrm{Re }\lambda)|u|^2-\textrm{Re}\,\langle W^*u,u \rangle_{F_x}\leq -(\textrm{Re }\lambda)|u|^2,
\end{align*}
where $|u|\in \ell_{m}^{p^*}(X)$ with $1<p^*<\infty$. Rewriting the last inequality, we obtain
\begin{equation}\nonumber
|u|(\Delta_{b,m}|u|+(\textrm{Re}\,\lambda)|u|)\leq 0.
\end{equation}
For all $x\in X$ such that $u(x)\neq 0$, we may divide both sides of the last inequality  by $|u(x)|$ to get
\begin{equation}\label{E:stochastic-completness}
(\Delta_{b,m}+\textrm{Re }\lambda)|u|\leq 0.
\end{equation}
Note that the inequality~(\ref{E:stochastic-completness}) also holds for those $x\in X$ such that $u(x)=0$; in this case, the left hand side of~(\ref{E:stochastic-completness}) is non-positive by~(\ref{E:ord-lap}). Thus, the inequality~(\ref{E:stochastic-completness}) holds for all $x\in X$.
By Lemma~\ref{L:keller-lenz-inequality}, from~(\ref{E:stochastic-completness}) we get $|u|\leq 0$. Hence, $u=0$.
\end{proof}
\subsection*{End of the Proof of Theorem~\ref{T:main-1}(i)}
The inequality~(\ref{E:accretive-hpmin}) means that $\hpmin$
is accretive in $\lpw$; see~(R1) in the Appendix with $j(u)=u|u|^{p-2}$. Hence, $\hpmin$ is
closable and $\overline{\hpmin}$ is accretive in $\lpw$; see the Appendix.
Therefore, for all $u\in\Dom(\overline{\hpmin})$ the following inequality holds:
\begin{equation}\label{E:closure-accretive-1}
\textrm{Re }\sum_{x\in X} m(x)\langle ({\hpmin} u)(x),u(x)|u(x)|^{p-2}\rangle_{F_{x}}\geq 0.
\end{equation}
Let $\lambda\in\mathbb{C}$ with $\textrm{Re } \lambda>0$. Using H\"older's inequality, from~(\ref{E:closure-accretive-1}) we get
\begin{equation}\label{E:hpmin-coercive}
(\textrm{Re }\lambda)\|u\|_{p}\leq \|(\lambda+\overline{\hpmin})u\|_{p},
\end{equation}
for all $u\in\Dom(\overline{\hpmin})$.
By Lemma~\ref{L:hpmin-dense-range} we know that $\textrm{Ran }(\hpmin+\lambda)$ is dense in $\lpw$. This, together with~(\ref{E:hpmin-coercive}), shows that
$\textrm{Ran }(\overline{\hpmin}+\lambda)=\lpw$. Hence, from~(\ref{E:hpmin-coercive}) we get
\[
\|(\xi+\overline{\hpmin})^{-1}\|\leq \frac{1}{\xi},\qquad\textrm{for all }\xi>0,
\]
where $\|\cdot\|$ is the operator norm $\lpw\to\lpw$. Thus, $-\overline{\hpmin}$ satisfies the conditions (C1), (C2) and (C3) of Hille--Yosida Theorem; see the Appendix. Hence, $-\overline{\hpmin}$ is the generator of a strongly continuous contraction semigroup on $\lpw$. $\hfill\square$
\subsection*{Proof of Theorem~\ref{T:main-1}(ii)}
Repeating the proof of Lemma~\ref{L:hpmin-dense-range} in the case $p=1$ and using Remark~\ref{R:stochastic-completeness},  from~(\ref{E:stochastic-completness}) with $u\in \Gamma_{\ell^{\infty}}(X,F)$ we obtain $|u|=0$. Therefore, for all $\lambda\in\mathbb{C}$ with $\textrm{Re } \lambda>0$, the set $\textrm{Ran }(H_{1,\min}+\lambda)$ is dense in $ \Gamma_{\ell^{1}_{m}}(X,F)$. From here on, we may repeat the proof
of Theorem~\ref{T:main-1}(i). $\hfill\square$

\section{Proof of Theorem~\ref{T:main-1-1}}\label{SS:main-1-1}
We begin with the following lemma.
\begin{lem}\label{L:closed-hpm} Let $1\leq p<\infty$ and $1/p+1/p^*=1$. Assume that~(\ref{E:assumption-vcomp-lwp-q}) is satisfied. Then $\hpm$ is a closed operator.
\end{lem}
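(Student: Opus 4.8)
The plan is to verify closedness of $\hpm$ directly from the definition: I would take a sequence $u_n\in\Dom(\hpm)$ with $u_n\to u$ in $\lpw$ and $\hpm u_n\to f$ in $\lpw$, and show that $u\in\Dom(\hpm)$ and $\hpm u=f$. The entire argument is a closed-graph manipulation powered by Green's formula (Lemma~\ref{L:Green}): the idea is to test the identity $\hpm u_n=\widetilde{H}_{W,\Phi}u_n$ against an arbitrary $v\in\Gamma_{c}(X,F)$, move $\widetilde{H}_{W,\Phi}$ onto $v$ via Green's formula, pass to the limit, and then move the operator back onto $u$.

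First I would record a bookkeeping observation about the duality hypothesis. For $v\in\Gamma_{c}(X,F)$ the sections $\widetilde{H}_{W,\Phi}v$ and $\widetilde{H}_{W^{*},\Phi}v$ agree on $X\setminus\operatorname{supp}v$ (both equal $\Delta^{F,\Phi}_{b,m}v$ there) and differ on the finite set $\operatorname{supp}v$ only by the finitely supported section $(W^{*}-W)v$; hence assumption~(\ref{E:assumption-vcomp-lwp-q}) is equivalent to $\widetilde{H}_{W^{*},\Phi}[\Gamma_{c}(X,F)]\subseteq\lpw\cap\lqw$. Applying Lemma~\ref{L:Green}(i) to the operator $\widetilde{H}_{W^{*},\Phi}$ with exponent $p^{*}$, it follows that every section in $\lpw$, and in particular $u$, lies in $\widetilde{D}$; thus $\widetilde{H}_{W,\Phi}u$ is defined. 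Now for any $v\in\Gamma_{c}(X,F)$, Lemma~\ref{L:Green}(ii) applied to $u_n\in\Dom(\hpm)\subseteq\widetilde{D}$ gives
\[
\sum_{x\in X}m(x)\langle\widetilde{H}_{W,\Phi}u_n,v\rangle_{F_x}
=\sum_{x\in X}m(x)\langle u_n,\widetilde{H}_{W^{*},\Phi}v\rangle_{F_x}.
\]
Letting $n\to\infty$: the left-hand side is a finite sum over $\operatorname{supp}v$, and $\lpw$-convergence of $\hpm u_n$ to $f$ forces fiberwise convergence, so it tends to $\sum_{x}m(x)\langle f,v\rangle_{F_x}$; the right-hand side tends to $\sum_{x}m(x)\langle u,\widetilde{H}_{W^{*},\Phi}v\rangle_{F_x}$ by H\"older's inequality, since $u_n\to u$ in $\lpw$ and $\widetilde{H}_{W^{*},\Phi}v\in\lqw$. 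Hence
\[
\sum_{x\in X}m(x)\langle f,v\rangle_{F_x}=\sum_{x\in X}m(x)\langle u,\widetilde{H}_{W^{*},\Phi}v\rangle_{F_x},
\qquad v\in\Gamma_{c}(X,F).
\]
Since $u\in\widetilde{D}$, Lemma~\ref{L:Green}(ii) (again with $W^{*}$ in place of $W$, using $(W^{*})^{*}=W$) rewrites the right-hand side as $\sum_{x}m(x)\langle\widetilde{H}_{W,\Phi}u,v\rangle_{F_x}$. Therefore $\sum_{x}m(x)\langle\widetilde{H}_{W,\Phi}u-f,v\rangle_{F_x}=0$ for all $v\in\Gamma_{c}(X,F)$; choosing $v$ supported at a single vertex yields $\widetilde{H}_{W,\Phi}u=f$ pointwise. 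Since $f\in\lpw$ and $u\in\lpw\cap\widetilde{D}$, this gives $u\in\Dom(\hpm)$ with $\hpm u=f$, so $\hpm$ is closed.

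The only genuinely delicate point is the preliminary bookkeeping that makes~(\ref{E:assumption-vcomp-lwp-q}) usable in two places at once — to place $u$ in $\widetilde{D}$ (via Lemma~\ref{L:Green}(i) applied to $\widetilde{H}_{W^{*},\Phi}$) and to justify the H\"older estimate on the right-hand side (via $\widetilde{H}_{W^{*},\Phi}v\in\lqw$). Everything else is routine. I anticipate no serious obstacle beyond carefully tracking which exponent, $p$ or $p^{*}$, and which potential, $W$ or $W^{*}$, each application of Lemma~\ref{L:Green} uses, and checking that the argument is uniform in $1\le p<\infty$ (the case $p=1$ corresponds to exponent $p^{*}=\infty$, which Lemma~\ref{L:Green} explicitly allows).
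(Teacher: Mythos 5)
Your proposal is correct and follows essentially the same route as the paper's proof: test against $v\in\Gamma_{c}(X,F)$, transfer $\widetilde{H}_{W,\Phi}$ onto $v$ via Lemma~\ref{L:Green}(ii), pass to the limit using H\"older and~(\ref{E:assumption-vcomp-lwp-q}), place $u$ in $\widetilde{D}$ via Lemma~\ref{L:Green}(i), and transfer back. Your preliminary remark that $\widetilde{H}_{W,\Phi}v$ and $\widetilde{H}_{W^{*},\Phi}v$ differ only by a finitely supported section is a welcome clarification of a point the paper leaves implicit.
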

\begin{proof} Let $u_k$ be a sequence of elements in $\Dom(\hpm)$ such that $u_k\to u$ and $\hpm u_k\to f$, as $k\to\infty$, using the norm convergence in $\lpw$. We need to show that $u\in\Dom(\hpm)$ and $f=\hpm u$. Let $v\in\Gamma_{c}(X,F)$ be arbitrary, and consider the sum
\[
\sum_{x\in X}m(x)\langle(\hpm u_k)(x),v(x)\rangle_{F_x}=\sum_{x\in X}m(x)\langle(\widetilde{H}_{W,\Phi} u_k)(x),v(x)\rangle_{F_x}.
\]
By Lemma~\ref{L:Green}(ii) we have
\begin{equation}\label{E:closed-norm-conv}
\sum_{x\in X}m(x)\langle(\widetilde{H}_{W,\Phi} u_k)(x),v(x)\rangle_{F_x}=\sum_{x\in X}m(x)\langle u_k(x),(\widetilde{H}_{W^*,\Phi}v)(x)\rangle_{F_x}.
\end{equation}
Using the norm convergence $u_k\to u$ in $\lpw$ and the assumption $\widetilde{H}_{W,\Phi}v\in\lqw$ with $1/p+1/p^*=1$, by H\"older's inequality we get
\[
\sum_{x\in X}m(x)\langle u_k(x),(\widetilde{H}_{W^*,\Phi}v)(x)\rangle_{F_x}\to \sum_{x\in X}m(x)\langle u(x),(\widetilde{H}_{W^*,\Phi}v)(x)\rangle_{F_x}.
\]
Using the norm convergence $\widetilde{H}_{W,\Phi}u_k\to f$ in $\lpw$, by H\"older's inequality we get
\[
\sum_{x\in X}m(x)\langle(\widetilde{H}_{W,\Phi} u_k)(x),v(x)\rangle_{F_x}\to \sum_{x\in X}m(x)\langle f(x),v(x)\rangle_{F_x}.
\]
Therefore, taking the limit as $k\to\infty$ on both sides of~(\ref{E:closed-norm-conv}), we obtain
\begin{equation}\label{E:closed-norm-conv-1}
\sum_{x\in X}m(x)\langle u(x),(\widetilde{H}_{W^*,\Phi}v)(x)\rangle_{F_x}=\sum_{x\in X}m(x)\langle f(x),v(x)\rangle_{F_x}.
\end{equation}
Since $u\in\lpw$ and since $\widetilde{H}_{W,\Phi}[\Gamma_{c}(X,F)]\subseteq \lqw$, we may use Lemma~\ref{L:Green}(i) to conclude $u\in \widetilde{D}$. Using Lemma~\ref{L:Green}(ii), we rewrite the left-hand side of~(\ref{E:closed-norm-conv-1}) as follows:
\begin{equation}\label{E:closed-norm-conv-2}
\sum_{x\in X}m(x)\langle u(x),(\widetilde{H}_{W^*,\Phi}v)(x)\rangle_{F_x}=\sum_{x\in X}m(x)\langle (\widetilde{H}_{W,\Phi}u)(x),v(x)\rangle_{F_x}.
\end{equation}
Since $v\in \Gamma_{c}(X,F)$ is arbitrary, by~(\ref{E:closed-norm-conv-1}) and~(\ref{E:closed-norm-conv-2}) we get $\widetilde{H}_{W,\Phi}u=f$. Thus, $u\in\Dom(\hpm)$ and $\hpm u=f$. Therefore, $\hpm$ is closed.
\end{proof}
\subsection*{Maximal Operator Associated with $\Delta_{b,m}$} Let $1\leq p<\infty$ and let $\Delta_{b,m}$ be as in~(\ref{E:ord-lap}). We define the maximal operator $L_{p,\max}$ in
$\lwp$ by the formula $L_{p,\max}u=\Delta_{b,m}u$ with the domain
\begin{equation}\label{E:domapm}\nonumber
\Dom(L_{p,\max})=\{u\in \lwp\cap \widetilde{D}: \Delta_{b,m}u\in \lwp\},
\end{equation}
where $\widetilde{D}$ is as in~(\ref{E:dom-F}) and sections are replaced by functions $X\to\mathbb{C}$.

Under the assumption (A1), it is known that $-L_{p,\max}$ generates
a strongly continuous contraction semigroup on $\lwp$ for all $1\leq p<\infty$; see Theorem 5 in~\cite{Keller-Lenz-09}. Thus, by Hille--Yosida Theorem (see the Appendix), we have
\begin{equation}\label{E:estimate-p-strichartz}
(0,\infty)\subset\rho(-L_{p,\max})\qquad\textrm{ and
}\qquad \|(\xi+L_{p,\max})^{-1}\|\leq
\frac{1}{\xi},
\end{equation}
for all $\xi>0$, where $\rho (T)$ denotes the resolvent set of an operator $T$.
\begin{lem}\label{L:kato-2} Let $1\leq p<\infty$ and let
$\lambda\in\mathbb{C}$ with $\textrm{Re }\lambda>0$. Assume that the hypotheses (A1) and~(\ref{E:bdd-below-w}) are satisfied. Then, the following
properties hold:
\begin{enumerate}
\item [(i)] for all $u\in\Dom(\hpm)$, we have
\begin{equation}\label{E:kato-to-prove}
(\textrm{Re }\lambda)\|u\|_{p}\leq \|(\lambda+\hpm)u\|_{p};
\end{equation}
\item [(ii)] the operator $\lambda+ \hpm\colon \Dom(\hpm)\subset
\lpw\to\lpw$ is injective.
\end{enumerate}
\end{lem}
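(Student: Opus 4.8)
The plan is to transfer the scalar resolvent estimate \eqref{E:estimate-p-strichartz} for $L_{p,\max}$ to the vector-bundle operator $\hpm$ by means of Kato's inequality \eqref{E:kato-discrete} and the comparison principle of Lemma~\ref{L:keller-lenz-inequality}. Part (ii) will be immediate from (i): if $(\lambda+\hpm)u=0$ for some $u\in\Dom(\hpm)$, then \eqref{E:kato-to-prove} gives $(\operatorname{Re}\lambda)\|u\|_p\le 0$, and since $\operatorname{Re}\lambda>0$ this forces $u=0$.

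For (i), fix $u\in\Dom(\hpm)$, set $f:=(\lambda+\hpm)u$ (which lies in $\lpw$ by the definition of $\Dom(\hpm)$) and $\xi:=\operatorname{Re}\lambda>0$. Since $\hpm u=\Delta^{F,\Phi}_{b,m}u+Wu$, we have $\Delta^{F,\Phi}_{b,m}u=f-\lambda u-Wu$, so the fiberwise Cauchy--Schwarz inequality together with \eqref{E:bdd-below-w} gives, at each $x\in X$,
\begin{align*}
\operatorname{Re}\langle\Delta^{F,\Phi}_{b,m}u,u\rangle_{F_x}&=\operatorname{Re}\langle f,u\rangle_{F_x}-(\operatorname{Re}\lambda)|u(x)|^2_{F_x}-\operatorname{Re}\langle Wu,u\rangle_{F_x}\\
&\le |f(x)|_{F_x}\,|u(x)|_{F_x}-\xi|u(x)|^2_{F_x}.
\end{align*}
Plugging this into Kato's inequality \eqref{E:kato-discrete}, dividing by $|u(x)|_{F_x}$ at the vertices where $u(x)\neq 0$, and noting that at a vertex with $u(x)=0$ one has $(\Delta_{b,m}|u|)(x)\le 0$ directly from \eqref{E:ord-lap}, I obtain the pointwise scalar inequality
\[
(\Delta_{b,m}+\xi)|u|\le |f|\qquad\text{on }X,
\]
where $|u|$ and $|f|$ denote the nonnegative functions $x\mapsto|u(x)|_{F_x}$ and $x\mapsto|f(x)|_{F_x}$. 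Here $|u|,|f|\in\ell^p_m(X)$ with $\||u|\|_{\ell^p_m}=\|u\|_p$ and $\||f|\|_{\ell^p_m}=\|f\|_p$, and $|u|$ belongs to the function analogue of the set $\widetilde{D}$ because $u\in\widetilde{D}$, so $\Delta_{b,m}|u|$ is well defined pointwise.

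The main step is a comparison argument. Since (A1) holds, \eqref{E:estimate-p-strichartz} guarantees $\xi\in\rho(-L_{p,\max})$ and $\|(\xi+L_{p,\max})^{-1}\|\le \xi^{-1}$; set $\psi:=(\xi+L_{p,\max})^{-1}|f|$, so that $\psi\in\Dom(L_{p,\max})\subseteq\ell^p_m(X)\cap\widetilde{D}$, $(\Delta_{b,m}+\xi)\psi=|f|$ pointwise, and $\|\psi\|_{\ell^p_m}\le\xi^{-1}\|f\|_p$. Subtracting the displayed inequality from this identity, the real-valued function $\psi-|u|\in\ell^p_m(X)\cap\widetilde{D}$ satisfies $(\Delta_{b,m}+\xi)(\psi-|u|)\ge 0$ pointwise on $X$, so Lemma~\ref{L:keller-lenz-inequality} (with $\alpha=\xi$ and the given $p$) yields $\psi-|u|\ge 0$, i.e. $0\le|u|\le\psi$ on $X$. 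Hence $\|u\|_p=\||u|\|_{\ell^p_m}\le\|\psi\|_{\ell^p_m}\le\xi^{-1}\|f\|_p$, which is exactly \eqref{E:kato-to-prove}.

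I do not anticipate a genuine obstacle; the care required is bookkeeping --- checking that $|u|$, $|f|$, $\psi$, and $\psi-|u|$ all lie in $\ell^p_m(X)\cap\widetilde{D}$ so that the pointwise identities for $\Delta_{b,m}$ make sense and Lemma~\ref{L:keller-lenz-inequality} applies, and treating carefully the vertices where $u$ vanishes when passing from the $|u|$-weighted inequality to $(\Delta_{b,m}+\xi)|u|\le|f|$. In particular, no positivity-preservation of the resolvent is needed: the comparison $|u|\le\psi$ comes solely from Lemma~\ref{L:keller-lenz-inequality}.
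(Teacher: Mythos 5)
Your proposal is correct and follows essentially the same route as the paper's own proof: Kato's inequality plus \eqref{E:bdd-below-w} to derive the pointwise scalar bound $(\Delta_{b,m}+\xi)|u|\le|f|$ (with the same careful treatment of vertices where $u$ vanishes), then comparison with $(\xi+L_{p,\max})^{-1}|f|$ via Lemma~\ref{L:keller-lenz-inequality} and the resolvent estimate \eqref{E:estimate-p-strichartz}, and finally part (ii) as an immediate consequence of \eqref{E:kato-to-prove}. No substantive differences.
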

\begin{proof} Let
$u\in\Dom(\hpm)$ and $f:=(\lambda+\hpm)u$. By the definition of
$\Dom(\hpm)$, we have $f\in \lpw$, where $1<p<+\infty$.
Using~(\ref{E:kato-discrete}) and~(\ref{E:bdd-below-w}) we get
\begin{align*}
&|u|((\textrm{Re}\,\lambda+\Delta_{b,m})|u|)\leq \textrm{Re}\,\langle(\lambda+\Delta^{F,\Phi}_{b,m})u,u\rangle_{F_{x}}\nonumber\\
&\leq\textrm{Re}\,\langle(\lambda+\Delta^{F,\Phi}_{b,m}+W)u,u\rangle_{F_{x}}=\textrm{Re}\,\langle f,u\rangle_{F_x}\leq |f||u|.
\end{align*}
In what follows, we denote $\xi:=\textrm{Re}\,\lambda$.
For all $x\in X$ such that $u(x)\neq 0$, we may divide both sides of the last inequality  by $|u(x)|$ to get
\begin{equation}\label{E:kato-1-intro}
(\xi+ \Delta_{b,m})|u| \ \leq \ |f|.
\end{equation}
Note that the inequality~(\ref{E:kato-1-intro}) also holds for those $x\in X$ such that $u(x)=0$; in this case, the left hand side of~(\ref{E:kato-1-intro}) is non-positive by~(\ref{E:ord-lap}). Thus, the inequality~(\ref{E:kato-1-intro}) holds for all $x\in X$.

According to~(\ref{E:estimate-p-strichartz}) the linear operator
\[
(\xi+L_{p,\max})^{-1}\colon \lwp \to \lwp
\]
is  bounded. Hence, we can rewrite~(\ref{E:kato-1-intro}) as
\begin{equation}\label{E:temp-1-kato}
(\xi + \Delta_{b,m})[(\xi+L_{p,\max})^{-1}|f| -|u|] \geq 0.
\end{equation}
Since
\[
(\xi+L_{p,\max})^{-1}|f|\in \lwp \qquad\textrm{ and }\qquad|u|\in\lwp,
\]
it follows that $((\xi+L_{p,\max})^{-1}|f| -|u|)\in \lwp$.  Hence, applying Lemma~\ref{L:keller-lenz-inequality} to~(\ref{E:temp-1-kato}) we get
\begin{equation}\label{E:kato-link-strichartz}\nonumber
|u|\leq (\xi+L_{p,\max})^{-1}|f|.
\end{equation}
Taking the $\ell^p$-norms on both sides and using~(\ref{E:estimate-p-strichartz})
we get
\[
\|u\|_{p}\leq
\|(\xi+L_{p,\max})^{-1}|f|\|_{p}\leq\frac{1}{\xi}\|f\|_{p},
\]
and~(\ref{E:kato-to-prove}) is proven.
We turn to property (ii). Assume that $u\in \Dom(\hpm)$ and
$(\lambda+\hpm)u=0$. Using~(\ref{E:kato-to-prove}) we
get $\|u\|_{p}=0$, and hence $u=0$. This shows that $\lambda+\hpm$
is injective.
\end{proof}

\subsection*{End of the Proof of Theorem~\ref{T:main-1-1}} We will consider the cases $1<p<\infty$ and $p=1$ simultaneously, keeping in mind the stochastic completeness assumption on $(X,b,m)$ when $p=1$.
Since $\hpmin\subset\hpm$ and
since $\hpm$ is closed (see Lemma~\ref{L:closed-hpm}), it follows that
$\overline{\hpmin}\subset\hpm$. To prove the equality $\overline{\hpmin}=\hpm$, it is enough to show that $\Dom(\hpm)\subset \Dom(\overline{\hpmin})$. Let $\xi>0$, let $u\in\Dom(\hpm)$, and consider
\begin{equation}\label{E:element-v}
v:=(\overline{\hpmin}+\xi)^{-1}(\hpm+\xi)u.
\end{equation}
By Theorem~\ref{T:main-1}, the element $v$ is well-defined, and $v\in\Dom(\overline{\hpmin})$.

Since $\overline{\hpmin}\subset\hpm$, from~(\ref{E:element-v}) we get
\[
(\hpm+\xi)(v-u)=0.
\]
Since $\hpm+\xi$ is an injective operator (see Lemma~\ref{L:kato-2}), we get $v=u$. Therefore, $u\in\Dom(\overline{\hpmin})$. $\hfill\square$

\section{Proof of Theorem~\ref{T:main-5}}\label{S:t-5}
The following lemma, whose proof is given in Proposition 4.1 of~\cite{vtt-11-4}, describes an important property of regular graphs. For the case of metrically complete graphs, see~\cite{HKMW}.

\begin{lem}\label{L:finsup} Assume that $(X,b,m)$ is a locally finite graph with a path metric $d_{\sigma}$. Additionally, assume that $(X,b,m)$ is regular in the sense of Definition~\ref{ikx}. Let $X_{\varepsilon}$ be as in~(\ref{E:x-eps}). Then, closed and bounded subsets of
$X_{\varepsilon}$ are finite.
\end{lem}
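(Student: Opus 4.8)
The plan is to prove the slightly stronger statement that \emph{every} infinite subset $B\subseteq X_\varepsilon$ is unbounded (closedness will not be needed). Suppose, for contradiction, that $B\subseteq X_\varepsilon$ is infinite and contained in $\overline{B_R(o)}$ for some vertex $o\in B$ and some $R>0$. Since $(X,b,m)$ is regular, I fix $\varepsilon'$ with $0<\varepsilon'<\varepsilon$ small enough that bounded subsets of $\partial X_{\varepsilon'}$ are finite. Two structural facts will drive everything. First, extending $D$ from~(\ref{E:dist-boundary}) to $\widehat d_\sigma(\cdot,X_\infty)$ on the completion $\widehat X$ yields a $1$-Lipschitz, hence continuous, function; for each $t>0$ the set $\{\,\zeta\in\widehat X:\widehat d_\sigma(\zeta,X_\infty)\ge t\,\}$ is closed and disjoint from $X_\infty$, so it coincides with $X_t$ and therefore $X_t$ is a complete metric space. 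Second, because $d_\sigma$ is a path metric on a locally finite graph, each vertex $x$ is uniformly isolated: any path issuing from $x$ has first edge of length at least $\delta_x:=\min_{z\sim x}\sigma(x,z)>0$, so $d_\sigma(x,y)\ge\delta_x$ for all $y\ne x$. Consequently, no sequence of pairwise distinct vertices can converge in $\widehat X$ to a vertex.

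The engine of the argument is the following finiteness claim: for every $w\in X_t$ and every $\ell>0$, only finitely many vertices are reachable from $w$ along a path lying entirely in $X_t$ and of $d_\sigma$-length at most $\ell$. I would prove this by K\"onig's lemma. If infinitely many such vertices existed, then, local finiteness making combinatorial balls finite, they would occur at unbounded combinatorial distance, so the prefix tree of the admissible paths would be infinite and locally finite and would contain an infinite branch $(v_k)$. Each prefix of this branch is admissible, whence $\sum_k\sigma(v_k,v_{k+1})\le\ell<\infty$; thus $(v_k)$ is Cauchy and converges in the complete space $\widehat X$ to some $z$. As all $v_k$ lie in the closed set $X_t$, we get $z\in X_t\subseteq X$, i.e. $z$ is a vertex, and the distinct vertices $v_k$ converge to it, contradicting uniform isolation.

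To finish, I would manufacture from the infinite bounded set $B$ an infinite bounded subset of $\partial X_{\varepsilon'}$. Join $o$ to infinitely many distinct $y_j\in B$ by paths $P_j$ of length $<R+1$. Applying the engine with $w=o$, $t=\varepsilon'$, $\ell=R+1$ shows that only finitely many $y_j$ are reachable from $o$ while staying in $X_{\varepsilon'}$; hence for all but finitely many $j$ the path $P_j$ leaves $X_{\varepsilon'}$. For such $j$, traverse $P_j$ backward from $y_j\in X_{\varepsilon'}$ and let $w_j$ be the initial vertex of the final maximal sub-path $w_j\to\cdots\to y_j$ contained in $X_{\varepsilon'}$; its predecessor $u_j$ on $P_j$ then satisfies $u_j\sim w_j$ and $D(u_j)<\varepsilon'$, so $w_j\in\partial X_{\varepsilon'}$, and $w_j\in\overline{B_{R+1}(o)}$. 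The decisive point is that $j\mapsto w_j$ is finite-to-one: if $w_j=w$, then $y_j$ is reachable from $w$ within $X_{\varepsilon'}$ along a path of length $<R+1$, and the engine (now rooted at $w$) bounds the number of such $y_j$. Therefore $\{w_j\}$ is an infinite bounded subset of $\partial X_{\varepsilon'}$, contradicting regularity, and $B$ must be finite.

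I expect the main obstacle to be exactly this last step: arranging the assignment $y_j\mapsto w_j$ so that it lands in $\partial X_{\varepsilon'}$ and is finite-to-one. The subtlety is that $P_j$ may cross the level $\{D=\varepsilon'\}$ repeatedly, so one must isolate the \emph{last} upward crossing, namely the entry into the final $X_{\varepsilon'}$-excursion containing $y_j$, and then invoke the engine a second time to control the fibers. Completeness of each $X_t$ together with uniform isolation of vertices is precisely what makes both applications of K\"onig's lemma succeed.
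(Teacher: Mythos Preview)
The paper does not actually prove this lemma: the sentence introducing it reads ``The following lemma, whose proof is given in Proposition~4.1 of~\cite{vtt-11-4}, describes an important property of regular graphs,'' and no further argument is given. There is thus no in-paper proof to compare against; you have supplied a self-contained argument where the authors defer to an external reference.

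Your argument is correct. The two structural observations---that each $X_t$ is closed in the completion $\widehat X$ (hence complete) and that local finiteness forces every vertex to be an isolated point of $(X,d_\sigma)$---are exactly what is needed to make K\"onig's lemma succeed inside $X_t$. The ``engine'' claim and its two applications, together with the last-entry construction of the boundary vertices $w_j\in\partial X_{\varepsilon'}$ and the finite-to-one verification, are carefully set up and close the argument as intended.

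One small imprecision: in the K\"onig step you assert that ``the distinct vertices $v_k$ converge'' to a vertex $z$. The $v_k$ along the infinite branch need not be pairwise distinct, since paths may revisit vertices. This does not matter: convergence $v_k\to z\in X$ forces $d_\sigma(v_k,z)<\delta_z$ for large $k$, hence $v_k=z$ eventually, and then $v_k\sim v_{k+1}$ contradicts $b(z,z)=0$. You are also right that closedness is redundant here, since isolation of every vertex makes every subset of $(X,d_\sigma)$ closed.
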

By Remark~\ref{R:min-def-op} and Lemma~\ref{L:Green}(ii), $\widetilde{H}_{W,\Phi}|_{\Gamma_{c}(X,F)}$ is a symmetric operator in $\lw$. To prove Theorem~\ref{T:main-5} we follow the method of Theorem 1.5 in~\cite{Milatovic-Truc-13}, which goes back to~\cite{Col-Tr} in the continuous setting. The main ingredient is the following Agmon-type estimate:

\begin{lem}\label{L:Hor}
Let $\lambda\in\mathbb{R}$ and let $v\in\lw$ be a weak solution of $(\widetilde{H}_{W,\Phi}-\lambda)v=0$.
Assume that  there exists a constant $c_1>0$ such that, for all $u \in \Gamma_{c}(X,F)$
\begin{equation}\label{E:bou}
( u,\, (\widetilde{H}_{W,\Phi}-\lambda) u )  \geq
 \dfrac{1}{2}\sum_{x\in X}\max \left(\dfrac{1}{D(x)^2},1
\right) m(x) |u(x)|^2_{F_x}  +  c_1 \|u\|^2,
\end{equation}
where $D(x)$ is as in~(\ref{E:dist-boundary}). Then $v\equiv 0$.
\end{lem}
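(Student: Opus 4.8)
The plan is to use the Agmon-type estimate~\eqref{E:bou} as a Caccioppoli-type inequality that forces a weak solution to vanish, via a cutoff argument adapted to the Cauchy boundary structure of a regular graph. First I would introduce, for a parameter $R>0$, a family of finitely supported cutoff functions $\chi_R\colon X\to[0,1]$ built from the distance-to-boundary function $D(\cdot)$ and the intrinsic path metric $d_\sigma$: roughly, $\chi_R$ equals $1$ on a large ``inner'' region and is supported in a ``shell'' where the terms $\tfrac{1}{D(x)^2}$ are under control, so that $\chi_R u$ is finitely supported by Lemma~\ref{L:finsup} (closed bounded subsets of $X_\varepsilon$ being finite, together with local finiteness). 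The point of the construction is that the Dirichlet-type energy of $\chi_R$ against the intrinsic metric, i.e.\ $\tfrac1{m(x)}\sum_y b(x,y)(\chi_R(x)-\chi_R(y))^2$, is dominated by $C\max(D(x)^{-2},1)$ uniformly in $R$; this is exactly the budget that the first term on the right of~\eqref{E:bou} provides.

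Next I would apply~\eqref{E:bou} to $u=\chi_R v$. Expanding $(\chi_R v,(\widetilde H_{W,\Phi}-\lambda)(\chi_R v))$ via Green's formula (Lemma~\ref{L:Green}(ii)) and the ground-state-transform identity (Lemma~\ref{L:lemma-ground-state}), and using that $(\widetilde H_{W,\Phi}-\lambda)v=0$ weakly, the left-hand side collapses to a quadratic expression in the gradient of $\chi_R$ times $v$, specifically a term bounded by
\begin{equation}\nonumber
\frac12\sum_{x,y\in X} b(x,y)(\chi_R(x)-\chi_R(y))^2\,|v(x)|_{F_x}|v(y)|_{F_y}\le C\sum_{x\in X}\max\!\left(\frac{1}{D(x)^2},1\right)m(x)|v(x)|^2_{F_x},
\end{equation}
using unitarity of $\Phi_{y,x}$ and the intrinsic property of $d_\sigma$. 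Comparing this with the right-hand side of~\eqref{E:bou} applied to $\chi_R v$, the $\tfrac12\sum\max(D^{-2},1)m|\chi_R v|^2$ term on the right absorbs the leading part of the gradient term (this is where the precise constant $\tfrac12$ in~\eqref{E:potential-minorant} and~\eqref{E:bou} is used), leaving an inequality of the form $c_1\|\chi_R v\|^2\le (\text{remainder concentrated in the shell})$.

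Then I would let $R\to\infty$: since $v\in\lw$ and the shells shrink, the right-hand remainder tends to $0$ by dominated convergence, while $\|\chi_R v\|^2\to\|v\|^2$ by monotone convergence; hence $c_1\|v\|^2\le 0$, so $v\equiv 0$. The main obstacle I anticipate is the construction of the cutoff functions $\chi_R$ with the correct gradient bound relative to $\max(D(x)^{-2},1)$ near the Cauchy boundary $X_\infty$: one must exploit that $D$ is $1$-Lipschitz for $d_\sigma$ and that $d_\sigma$ is an intrinsic metric, choosing $\chi_R$ to interpolate logarithmically (or via a fixed Lipschitz profile of $D$) so that the telescoping of $(\chi_R(x)-\chi_R(y))^2$ across an edge is controlled by $\sigma(x,y)^2/D(x)^2$, and then summing over neighbors using Definition~\ref{D-intrinsic}. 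Local finiteness and regularity (Lemma~\ref{L:finsup}) are what guarantee $\chi_R v\in\Gamma_c(X,F)$ so that~\eqref{E:bou} is applicable; the rest is bookkeeping with Green's formula and Kato-type manipulations already established in Section~\ref{S:preli}.
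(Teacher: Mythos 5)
Your overall strategy coincides with the paper's: build a finitely supported cutoff from $D(\cdot)$ and $d_\sigma(x_0,\cdot)$ (with Lemma~\ref{L:finsup} guaranteeing finite support), apply the ground state transform of Lemma~\ref{L:lemma-ground-state} to the cutoff times $v$, bound the resulting gradient term via the intrinsic property of $d_\sigma$, compare with \eqref{E:bou}, and pass to the limit. However, the mechanism you propose for closing the argument has a genuine gap at exactly the step you flag as the main obstacle. You bound the Dirichlet energy of the cutoff by $C\max(D(x)^{-2},1)$ and then try to absorb
\[
C\sum_{x\in X}\max\Bigl(\frac{1}{D(x)^2},1\Bigr)m(x)|v(x)|^2_{F_x}
\]
into the term $\frac12\sum\max(D^{-2},1)\,m\,|\chi_R v|^2$ on the right of \eqref{E:bou}. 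This cannot work: that right-hand term carries the factor $\chi_R(x)^2$, which is small precisely in the region near the Cauchy boundary where $D(x)^{-2}$ is large, so the absorption fails exactly where it is needed; moreover $\sum D^{-2}m|v|^2$ is not known to be finite for $v\in\lw$, so the ``remainder concentrated in the shell'' need not tend to $0$ by dominated convergence. There is also no room for a lossy constant $C$: the Hardy-type coefficient in \eqref{E:potential-minorant} is exactly $\tfrac12$, matching exactly the $\tfrac12$ coming from Definition~\ref{D-intrinsic}, so any absorption argument with $C>\tfrac12$, or with a weight mismatch, loses.

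The correct choice --- your parenthetical ``fixed Lipschitz profile of $D$'' --- is the one the paper makes: take $f_\varepsilon(x)=F_\varepsilon(D(x))$ where $F_\varepsilon$ is a regularization (near $s=\varepsilon$) of $s\mapsto\min(s,1)$, multiplied by a wide plateau cutoff $g_\alpha$ in $d_\sigma(x_0,\cdot)$. This product is globally $\beta$-Lipschitz with $\beta=\rho/(\rho-\varepsilon)+\alpha\to1$, so the ground-state-transform term is bounded by $\tfrac12\beta^2\|v\|^2$ with no singular weight at all. The cancellation then happens on the other side of the inequality: on the set $\{x\colon D(x)\geq\rho,\ d_\sigma(x_0,x)\leq 1/\alpha\}$ one has $\max(D(x)^{-2},1)\,f_\varepsilon(x)^2=1$ identically, so the first term on the right of \eqref{E:bou} applied to $u=f_\varepsilon g_\alpha v$ is at least $\tfrac12$ times the $\ell^2_m$-mass of $v$ on that set, which converges to $\tfrac12\|v\|^2$ and cancels the $\tfrac12\beta^2\|v\|^2$ bound in the iterated limit $\alpha\to0+$, $\varepsilon\to0+$, $\rho\to0+$, leaving $c_1\|v\|^2\leq0$. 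In short, the ledger balances through the pointwise identity $\max(D^{-2},1)\min(D,1)^2=1$, not through absorption of a $D^{-2}$-weighted gradient estimate.
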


\begin{proof}
Let $\rho$ be a number such that $0< \rho < 1/2$.
For any $\varepsilon >0$, we define $f_{\varepsilon}\colon X \rightarrow \mathbb{R}$
by $f_{\varepsilon}(x)=F_{\varepsilon}(D(x))$, where $D(x)$ is as in~(\ref{E:dist-boundary}) and $F_{\varepsilon}\colon{\mathbb{R}}^{+} \rightarrow \mathbb{R}$ is given by $F_{\varepsilon}(s)=0$ for $s\leq \varepsilon$; $F_{\varepsilon}(s)=(s-\varepsilon)/(\rho  - \varepsilon)$ for
$\varepsilon\leq s \leq  \rho$; $F_{\varepsilon}(s)=s$ for $\rho  \leq s \leq  1$; $F_{\varepsilon}(s)=1$ for $s\geq 1$.

Let us fix a vertex $x_0$. For any $\alpha >0$, we define
$g_\alpha\colon X \rightarrow \mathbb{R}$ by $g_{\alpha}(x) =G_\alpha(d_\sigma(x_0,x))$, where  $G_\alpha\colon {\mathbb{R}}^{+} \rightarrow \mathbb{R}$ is given by $G_\alpha(s)=1$ for  $s\leq 1/\alpha$; $G_\alpha(s)=-\alpha  s + 2$ for $1/\alpha  \leq s \leq  2/\alpha$; $G_\alpha(s)=0$ for $s\geq 2/\alpha$. We also define
\begin{equation}\label{E:support}\nonumber
 E_{\varepsilon , \alpha}:= \{x \in X \colon \varepsilon \leq D(x) \  {\rm and}  \ d_\sigma(x_0,x) \leq 2/\alpha \}.
\end{equation}
By Lemma~\ref{L:finsup} the set $E_{\varepsilon, \alpha}$ is finite because $E_{\varepsilon, \alpha}$ is a closed
and bounded subset of $X_{\varepsilon}$, where $X_{\varepsilon}$ is as in~(\ref{E:x-eps}). Since the support of $f_{\varepsilon}g_{\alpha}$ is contained in $E_{\varepsilon, \alpha}$, it follows that $f_{\varepsilon}g_{\alpha}$ is finitely supported.
Using Lemma 4.1 in~\cite{vtt-11-2} it is easy to see that $f_{\varepsilon}g_\alpha$ is a $\beta$-Lipschitz function with respect to $d_{\sigma}$,  where $\beta={\rho}/({\rho-\varepsilon})+\alpha$. By Lemma~\ref{L:lemma-ground-state} with with $g$ replaced by $f_{\varepsilon}g_\alpha$, unitarity of $\Phi_{y,x}$, $\beta$-Lipschitz property of $f_{\varepsilon}g_\alpha$, and Defintion~\ref{D-intrinsic}, we have
\begin{equation}\label{E:equ-rhs}
(f_{\varepsilon}g_{\alpha} v,\, (\widetilde{H}_{W,\Phi}-\lambda) (f_{\varepsilon}g_{\alpha} v))
\leq \frac{1}{2}\left(\frac{\rho}{\rho-\varepsilon}+\alpha\right)^2\sum _{x\in X} m(x) |v(x)|^2_{F_x}.
\end{equation}

On the other hand, by the definitions of $f_{\varepsilon}$ and $g_{\alpha}$ and the assumption~(\ref{E:bou}) we have
\begin{equation} \label{E:equ-lhs}
(f_{\varepsilon}g_{\alpha}v,\,(\widetilde{H}_{W,\Phi}-\lambda) (f_{\varepsilon} g_{\alpha}v))
\geq \frac{1}{2} \sum _{x\in S_{\rho , \alpha}} m(x)|v(x)|^2_{F_x} +c_1 \| f_{\varepsilon}g_{\alpha} v
\|^2,
\end{equation}
where
\[
S_{\rho, \alpha}:=\{x \in X \colon \rho \leq D(x) \  {\rm and}  \ d_\sigma(x_0,x) \leq 1/\alpha\}.
\]
Combining (\ref{E:equ-lhs}) and (\ref{E:equ-rhs}) we obtain
\[
\frac{1}{2} \sum _{x\in S_{\rho , \alpha}}m(x)|v(x)|^2_{F_x} +c_1 \| f_{\varepsilon} g_{\alpha}v\|^2\leq \frac{1}{2}\left(\frac{\rho}{\rho-\varepsilon}+\alpha\right)^2\sum _{x\in X}m(x) |v(x)|^2_{F_x}.
\]
We fix $\rho$ and $\varepsilon$, and let $\alpha \to 0+$. After that, we let  $\varepsilon \to 0+$. Finally, we take the limit as $\rho\to 0+$.
As a result, we get $v\equiv 0$.
\end{proof}

\subsection*{End of the Proof of Theorem~\ref{T:main-5}} Since  $\Delta^{F,\Phi}_{b,m}|_{\Gamma_{c}(X,F)}$ is a non-negative operator, for all $u \in \Gamma_{c}(X,F)$, we have
\[
(u,\, \widetilde{H}_{W,\Phi} u) \geq \sum_{x\in X} m(x) \langle W(x)u(x),u(x)\rangle_{F_{x}}.
\]
Therefore, using assumption~(\ref{E:potential-minorant}) we obtain:
\begin{align}\label{E:bou-new}
&(u,\, (\widetilde{H}_{W,\Phi}-\lambda) u)\geq \dfrac{1}{2}\sum_{x\in X} \frac{1}{D(x)^2} m(x)|u(x)|^2_{F_{x}}-(\lambda+C)\|u\|^2\nonumber\\
&\geq  \dfrac{1}{2}\sum_{x\in X} \max\left(\frac{1}{D(x)^2},1\right) m(x)|u(x)|^2_{F_{x}} -(\lambda+C+1/2)\|u\|^2.
\end{align}
Choosing, for example, $\lambda=-C-3/2$ in~(\ref{E:bou-new}) we get the
inequality (\ref{E:bou}) with $c_1=1$. Thus, $(\widetilde{H}_{W,\Phi}-\lambda)|_{\Gamma_{c}(X,F)}$ with $\lambda=-C-3/2$ is a symmetric operator satisfying
$(u,\, (\widetilde{H}_{W,\Phi}-\lambda) u)\geq \|u\|^2$, for all $u\in \Gamma_{c}(X,F)$. By Theorem X.26 in~\cite{rs} we know that the essential self-adjointness of $(\widetilde{H}_{W,\Phi}-\lambda)|_{\Gamma_{c}(X,F)}$ is equivalent to the following statement: if $v\in\lw$ satisfies $(\widetilde{H}_{W,\Phi}-\lambda)v=0$, then $v=0$. Thus, by Lemma~\ref{L:Hor}, the operator $(\widetilde{H}_{W,\Phi}-\lambda)|_{\Gamma_{c}(X,F)}$ is essentially self-adjoint. Thus, $\widetilde{H}_{W,\Phi}|_{\Gamma_{c}(X,F)}$ is essentially self-adjoint. $\hfill\square$

\section*{Appendix}
In this section we review some concepts from the theory of one-parameter semigroups of operators on Banach spaces. Our exposition follows Chapters I and II of~\cite{engel-nagel}. A family of bounded linear operators $(T(t))_{t\geq0}$ on a Banach
space $\mathscr{X}$ is called a \emph{strongly continuous semigroup} (or
$C_0$-\emph{semigroup}) if it satisfies the functional equation
\[
T(t + s) = T(t)T(s),\quad\textrm{for all }t, s\geq 0,\qquad T(0) = I,
\]
and the maps $t\mapsto T(t)u$ are continuous from $\mathbb{R}_{+}$ to $\mathscr{X}$ for all $u\in \mathscr{X}$. Here, $I$ stands for the identity operator on $\mathscr{X}$.

The \emph{generator} $A \colon \Dom(A) \subset \mathscr{X} \to \mathscr{X}$ of a strongly continuous
semigroup $(T(t))_{t\geq0}$ on a Banach space $\mathscr{X}$ is the operator
\[
Au:=\lim_{h\to 0+}\frac{T(h)u-u}{h}
\]
defined for every $u$ in its domain
\[
\Dom(A) := \{u \in \mathscr{X}: \lim_{h\to 0+}h^{-1}(T(h)u-u)\textrm{ exists}\}.
\]
By Theorem II.1.4 in~\cite{engel-nagel}, the generator of a strongly continuous
semigroup is a closed and densely defined operator that determines the semigroup uniquely.

A linear operator $A$ on a Banach space $\mathscr{X}$ with norm $\|\cdot\|$ is called \emph{accretive}
if
\[
\|(\xi+A)u\|\geq \xi\|u\|,
\]
for all $\xi>0$ and all $u\in\Dom(A)$. In the literature on semigroups of operators, the term \emph{dissipative} is used when referring to an operator $A$ such that $-A$ is accretive. If $A$ is a densely defined accretive operator, then $A$ is closable and its closure $\overline {A}$ is also accretive; see Proposition II.3.14 in~\cite{engel-nagel}.

We now give another description of accretivity. Let $\mathscr{X}^{*}$ be the dual
space of $\mathscr{X}$. By the Hahn–-Banach theorem, for every $u\in \mathscr{X}$ there exists
$u^*\in \mathscr{X}^*$ such that $\langle u,u^*\rangle=\|u\|^2=\|u^*\|^2$, where $\langle u,u^*\rangle$ denotes the evaluation of the functional $u^*$ at $u$.
For every $u \in\mathscr{X}$, we define
\[
\mathscr{J}(u) :=\{u^*\in \mathscr{X}^*:\langle u,u^*\rangle=\|u\|^2=\|u^*\|^2\}.
\]
By Proposition II.3.23 of~\cite{engel-nagel}, an operator $A$ is accretive if and only if for
every $u\in\Dom (A)$ there exists $j(u)\in\mathscr{J}(u)$ such that
\[
\textrm{Re }\langle Au,j(u)\rangle\geq 0. \eqno{{\rm { (R1)}}}
\]
An operator $A$ on a Banach space $\mathscr{X}$ is called \emph{maximal accretive} if it is accretive and $\xi+A$ is surjective for all $\xi>0$. There is a connection between maximal accretivity and self-adjointness of operators on Hilbert spaces: $A$ is a self-adjoint and non-negative operator if and only if $A$ is symmetric, closed, and maximal accretive; see Problem V.3.32 in~\cite{Kato80}.

A contraction semigroup $(T(t))_{t\geq0}$ on a Banach space $\mathscr{X}$ is a semigroup such that $\|T(t)\|\leq 1$ for all $t\geq 0$, where $\|\cdot\|$ denotes the operator norm (of a bounded linear) operator $\mathscr{X}\to \mathscr{X}$. Generators of strongly continuous contraction semigroups are characterized as follows (Theorem II.3.5 in~\cite{engel-nagel}):

\medskip

\noindent{\textbf{Hille--Yosida Theorem.}} \emph{An operator $A$ on a Banach space generates a strongly continuous contraction semigroup  if and only if the following three conditions are satisfied:
\begin{enumerate}
\item [(C1)] $A$ is densely defined and closed;
\item [(C2)] $(0,\infty)\subset\rho(A)$, where $\rho(A)$ is the resolvent set of $A$;
\item [(C3)] $\|(\xi-A)^{-1}\|\leq\xi^{-1}$, for all $\xi>0$.
\end{enumerate}}

\medskip

Finally, we note that if $A$ generates a strongly continuous contraction semigroup, then $-A$ is maximal accretive.

\section*{Acknowledgment} The authors are grateful to the anonymous referee
for providing valuable suggestions and helping us improve the presentation of the material.


\end{document}